\theoremstyle{plain}
\newtheorem{theorem}{Theorem}
\newtheorem*{theorem*}{Theorem}
\newtheorem{lemma}{Lemma}[section]
\newtheorem{corollary}{Corollary}
\theoremstyle{definition}
\newcommand{\BA}{\begin{eqnarray}}
\newcommand{\EA}{\end{eqnarray}}
\definecolor{dgreen}{rgb}{0.0, 0.5, 0.0}
\begin{document}

\fontsize{14pt}{16.5pt}\selectfont

\begin{center}
\bf{Rigged Hilbert Space formulation for quasi-Hermitian \\composite systems
}
\end{center}
\fontsize{12pt}{11pt}\selectfont
\begin{center}
Shousuke Ohmori\\ 
\end{center}

\bigskip

\noindent
\it{National Institute of Technology, Gunma College, 
    580 Toribamachi, Maebashi-shi, Gunma 371-8530, Japan,
}\\
\it{
Waseda Research Institute for Science and Engineering, Waseda University, 
    Shinjuku, Tokyo 169-8555, Japan.
}

\bigskip

\noindent
*corresponding author: 42261timemachine@ruri.waseda.jp\\
~~\\
\rm
\fontsize{11pt}{14pt}\selectfont\noindent

\baselineskip 20pt

{\bf Abstract}\\
%
The discussion in this study delves into Dirac's bra-ket formalism for a quasi-Hermitian quantum composite system based on the rigged Hilbert space (RHS).
We establish an RHS with a positive-definite metric suitable for a quasi-Hermitian composite system. 
The obtained RHS is utilized to construct the bra and ket vectors for the non-Hermitian composite system and produce the spectral decomposition of the quasi-Hermitian operator.
We show that the symmetric relations regarding quasi-Hermitian operators can be extended to dual spaces, and all descriptions obtained using the bra-ket formalism are completely developed in the dual spaces.
Our methodology is applied to a non-Hermitian harmonic oscillator composed of conformal
multi-dimensional many-body systems.


\section{Introduction}
\label{sec:1}

The precise handling of Dirac's bra-ket notation mathematically is a critical issue in constructing a mathematical description in quantum mechanics because it is considered insufficient within von Neumann's Hilbert space theory.
The rigged Hilbert space (RHS) approach has been developed to resolve this issue\cite{Robert1966a,Robert1966b,Antoine1969a,Antoine1969b,Melsheimer1974a,Melsheimer1974b,Bohm1978,Bohm1981,Prigogine1996,Bohm1998,Antoiou1998,Antoiou2003,Gadella2003,Madrid2004,Madrid2005,Antoine2009,Antoine2021,Ohmori2024}.
Actually, by applying the nuclear spectral theorem\cite{Maurin1968} to an Hermitian operator (observable) in RHS, 
eigenequations for the bra and ket vectors are formulated, individually.
This theorem also provides the spectral expansions that are identified with the spectral decomposition for the discrete and continuous spectra specified by
the Dirac's $\delta$-function (distributions) found in the literature.
In this sense, the Dirac's bra-ket formalism is completely obtained, 
and nowadays, studies comprising the use of the RHS are conducted not only with a focus on  quantum systems but also on a wide array of physical problems, such as the construction of generalized spectral decompositions for operators used in dynamical systems\cite{Antoniou1993,Suchanecki1996,Chiba2015,Chiba2023}.

The RHS is also effective in the mathematical treatment of modern quantum physics, such as non-equilibrium open systems and non-Hermitian quantum systems\cite{Chruscinski2003,Chruscinski2004,Lars2019,Fernandez2022,Ohmori2022}.
In these systems, we encounter the physical problems that
can not be addressed within the framework of the Hilbert space alone, such as the $L^2$-space. 
For instance, in the problem of a quantum damped system,
the Hamiltonian exhibits only real spectra in the $L^2$-space.
However, when the RHS is selected as the fundamental space, the Hamiltonian exhibits complex eigenvalues, which are interpreted as the resonant states\cite{Chruscinski2003,Chruscinski2004}. 
This example suggests that the RHS is indispensable for addressing complex eigenvalues beyond the $L^2$-space theory.
Another illustrative example involving the RHS is non-Hermitian quantum systems whose non-Hermitian operator ${A}$ is assumed to possess the characteristic symmetric relation,
\begin{equation}
{A}^\dagger =\eta {A} \eta^{-1},
    \label{eqn:O1}
\end{equation}
where $\eta$ is the intertwining operator\cite{Lars2019,Fernandez2022,Ohmori2022}.
This type of operator is found in several non-Hermitian quantum systems, such as 
$\mathcal{PT}$-symmetrical systems\cite{Bender1998,Bender2007}.
The RHS is considered advantageous not only for providing an accurate description of bra-ket formalism, but also for analyzing the eigenvalues of the non-Hermitian operators of the system.

Recently, we presented a general framework for  describing the bra-ket formalism of a non-Hermitian system with the positive-definite metric drawn from the RHS approach\cite{Ohmori2022}. 
A non-Hermitian system with the positive-definite metric is the system characterized by a non-Hermitian operator that satisfies symmetry (\ref{eqn:O1}) where the intertwining operator $\eta$ is assumed to be a positive operator.
Such an operator is referred to as an $\eta$-quasi Hermitian\cite{Dieudonne1961,Mos2010,Antoine2013}.
Utilizing the RHS treatment,
\begin{equation}
    \Phi \subset \mathcal{H} \subset \Phi^\prime,\Phi^\times,
    \label{eqn:RHS}
\end{equation}
proposed by Madrid\cite{Madrid2005}, 
we established another RHS suitable for an $\eta$-quasi Hermitian called the $\eta$-RHS\cite{Ohmori2022},
\begin{equation}
    \Phi \subset \widetilde{\mathcal{H}_\eta} \subset \Phi^\prime,\Phi^\times.
    \label{eqn:RHS_eta}
\end{equation}
In (\ref{eqn:RHS}) and (\ref{eqn:RHS_eta}), 
$\mathcal{H}=(\mathcal{H}, \langle \cdot, \cdot \rangle_\mathcal{H})$ is a complex Hilbert space, and $\Phi=(\Phi, \tau_\Phi)$ is a nuclear space that is a dense linear subspace of $\mathcal{H}$ where the inner product $\langle \cdot, \cdot \rangle_\Phi$ on $\Phi$, $\langle \phi, \psi \rangle_\Phi \equiv \langle \phi, \psi \rangle_\mathcal{H}$ for $\phi, \psi \in \Phi$,  becomes separately continuous on $(\Phi, \tau_\Phi)$.
$\Phi^\prime$ is a family of continuous linear functionals on $(\Phi,\tau_\Phi)$, and $\Phi^\times$ is a family of continuous {\it anti-linear} functionals on $(\Phi,\tau_\Phi)$
(the function $f\in \Phi^{\times}$ is anti-linear if it satisfies $f(a\varphi+b\phi)=a^*f(\varphi)+b^*f(\phi)$, where $a$ and $b$ are complex numbers with complex conjugates $a^*$ and $b^*$ and $\varphi, \phi \in \Phi$).
$\widetilde{\mathcal{H}_\eta}=(\widetilde{\mathcal{H}_\eta},\widetilde{\langle \cdot, \cdot \rangle_\eta})$ is the completion of the pre-Hilbert space $\mathcal{H}_\eta=(\mathcal{H},\langle \cdot, \cdot \rangle_\eta)$ equipping the metric induced from the inner product $\langle \phi, \psi \rangle_\eta
    = \langle \phi, \eta \psi \rangle_\mathcal{H}$ for $\phi,\psi \in \mathcal{H}$ with respect to $\eta$. 
In Madrids' RHS treatment (\ref{eqn:RHS}), the bra and ket vectors are built as elements of $\Phi^\prime$ and $\Phi^\times$.
For $\varphi \in \Phi$, 
a map $\ket{\varphi}_\mathcal{H} : \Phi \rightarrow \mathbb{C}^1$, $\ket{\varphi}_\mathcal{H}(\phi) \equiv  \langle \phi,\, \varphi\rangle_\mathcal{H}$ for $\phi \in \Phi$ is defined; this map is called a ket of $\varphi$.
The bra vector of $\varphi$ is defined as the complex conjugate of $\ket{\varphi}_\mathcal{H}$, 
namely, the map $\bra{\varphi}_\mathcal{H} : \Phi \rightarrow \mathbb{C}^1 $ where $\bra{\varphi}_{\mathcal{H}}(\phi)= \ket{\varphi}_{\mathcal{H}}^*(\phi)=(\ket{\varphi}_{\mathcal{H}}(\phi))^*=\langle \varphi,\, \phi\rangle_\mathcal{H}$.
%
%
It is apparent that $\bra{\varphi}_{\mathcal{H}}$ and $\ket{\varphi}_{\mathcal{H}}$ belong to $\Phi^{\prime}$ and $\Phi^{\times}$ of $\Phi$, respectively.
In the $\eta$-RHS (\ref{eqn:RHS_eta}), an $\eta$-quasi Hermitian becomes Hermitian, 
and hence the nuclear spectral theorem can be executed for obtaining the spectral expansions for the bra and ket vectors. 
Calculations are then performed in the dual and anti-dual spaces $\Phi^{\prime}$ and $\Phi^{\times}$ using these expansions.
Furthermore, the spectral expansions for the original system (\ref{eqn:RHS}) are presented by using extension of $\eta$ to the dual spaces. 
We consider that the description of the bra-ket expanded under the dual spaces is a critical point when treating the RHS.

Building on our previous study, this study is aimed at developing the $\eta$-RHS to accommodate non-Hermitian composite systems.
This development is necessary for mathematically addressing non-Hermitian quantum systems described using quantum statistical mechanics and quantum field theory.
The bra-ket formalism is presented by focusing on non-Hermitian composite systems with the positive-definite metric.
Moreover, we describe the formulation of the quasi-Hermitian operator in a composite system based on dual spaces, as all the bra-ket calculations, such as spectral expansions, are executed in the dual spaces. 
This formulation has the advantage to settle the issue of characterizing the adjoint operator of a composite operator in a non-Hermitian system.

The remainder of this paper is organized as follows.
In Section~\ref{sec:2}, we establish an $\eta_1\otimes\eta_2$-RHS, which defines a bra and ket vector describing the composite system using the tensor product of a single $\eta$-RHS.
Moreover, we proceed to demonstrate the extension of the positive metric $\eta_1\otimes\eta_2$, which converts the bra and ket vectors in the non-Hermitian system into those of an Hermitian system and vice versa. 
Section~\ref{sec:3} describes the mathematical formulation of the spectral expansion of the bra and ket vectors with respect to the quasi-Hermitian operator for a composite system. 
After proving some properties of the quasi-Hermitian operator,
we show that the bra and ket expansions for the composite system can be performed in dual and anti-dual spaces using the generalized eigenvectors obtained as the tensor products of the bras and kets of single systems, respectively.
The procedure for transferring a quasi-Hermitian with its adjoint to the dual spaces is shown in Section \ref{sec:4}.
The results obtained from the previous sections and formulation of the bra-ket vectors and their expansion are then consistently described in the dual spaces.
The methodology of the dual spaces is applied to a non-Hermitian harmonic oscillator composed of conformal
multi-dimensional many-body systems in Section~\ref{sec:5}.
Finally, the conclusions are presented in Section~\ref{sec:6}.

\section{Construction of $\eta_1\otimes \eta_2$-RHS}
\label{sec:2}

\subsection{$\eta_1\otimes \eta_2$-RHS}
\label{sec:2.1}

In the context of the RHS approach,
a composite system is described by the tensor products of the RHSs\cite{Robert1966b,Melsheimer1974a,Ohmori2024}.
Let $\Phi_i \subset \mathcal{H}_i \subset \Phi _i^\prime, \Phi _i^\times $ be an RHS, (\ref{eqn:RHS}), where we set $i=1,2$ for simplicity.
The tensor product of these spaces that becomes an RHS is given by
\begin{eqnarray}
    \Phi_1 \hat{\otimes} {\Phi}_2 \subset \mathcal{H}_1 \bar {\otimes} \mathcal{H}_2 \subset 
    (\Phi_1 \hat{\otimes} {\Phi}_2)^{\prime},~(\Phi_1 \hat{\otimes} {\Phi}_2)^{\times}.
    \label{eqn:2-00}
\end{eqnarray}
Here, 
$\mathcal{H}_1 \bar {\otimes} \mathcal{H}_2=(\mathcal{H}_1 \bar{\otimes} \mathcal{H}_2,\langle \cdot,\, \cdot\rangle_{\mathcal{H}_1 \bar{\otimes} \mathcal{H}_2})$ is the completion of the algebraic tensor product $\mathcal{H}_1 \otimes \mathcal{H}_2=\Big{\{}\displaystyle\sum_{j=1}^m\varphi_{1j}\otimes\varphi_{2j}\mid \varphi_{1j}\in \mathcal{H}_1,\varphi_{2j}\in \mathcal{H}_2, j=1\sim m, m\in \mathbb{N} \Big{\}}$ with respect to the topology induced by $\langle \cdot,\, \cdot\rangle_{\mathcal{H}_1 \otimes \mathcal{H}_2}$; 
this inner product satisfies $\langle \varphi_1 \otimes \varphi_2,\, \phi_1\otimes\phi_2\rangle_{\mathcal{H}_1 \otimes \mathcal{H}_2}=\langle \varphi_1, \, \phi_1 \rangle_{\mathcal{H}_1}\langle \varphi_2, \, \phi_2 \rangle_{\mathcal{H}_2}$.
$\Phi_1 \hat{\otimes} {\Phi}_2=(\Phi_1 \hat{\otimes} {\Phi}_2,\widehat{\tau_p})$ is 
the completion of $(\Phi_1 \otimes {\Phi}_2,\tau_p)$ where  $\Phi_1 \otimes {\Phi}_2=\Big{\{}\displaystyle\sum_{j=1}^m\varphi_{1j}\otimes\varphi_{2j}\mid \varphi_{1j}\in {\Phi}_1,\varphi_{2j}\in {\Phi}_2, j=1\sim m, m\in \mathbb{N} \Big{\}}$ is the algebraic tensor product of the nuclear spaces $(\Phi_1,\tau_{\Phi_1})$ and $(\Phi_2,\tau_{\Phi_2})$ equipping the locally convex topology $\tau_p$ with the local base $\mathcal{B}_p=\{\Gamma(V_1\otimes V_2) \mid V_i\in \mathcal{B}_i,i=1,2\}$ where each $\mathcal{B}_i$ is a local base of $\tau_{\Phi_i}$ and $\Gamma(X)$ stands for the convex circled hull of a set $X$\cite{Schaefer1966}.
Not that $\Phi_1 \hat{\otimes} {\Phi}_2=(\Phi_1 \hat{\otimes} {\Phi}_2,\widehat{\tau_p})$ is a nuclear space.
$(\Phi_1 \hat{\otimes} {\Phi}_2)^{\prime}$ and $(\Phi_1 \hat{\otimes} {\Phi}_2)^{\times}$ are the dual and anti-dual spaces, respectively.
The bra and ket vectors corresponding to $\varphi\in \Phi_1 \hat{\otimes} {\Phi}_2$ are defined as
\begin{eqnarray}
    \bra{\varphi}_{\mathcal{H}_1 \bar{\otimes} \mathcal{H}_2} : \Phi_1 \hat{\otimes} {\Phi}_2 \to \mathbb{C},~\bra{\varphi}_{\mathcal{H}_1 \bar{\otimes} \mathcal{H}_2}(\phi)=\langle \varphi,\, \phi\rangle_{\mathcal{H}_1 \bar {\otimes} \mathcal{H}_2}, \nonumber\\
    \ket{\varphi}_{\mathcal{H}_1 \bar{\otimes} \mathcal{H}_2} : \Phi_1 \hat{\otimes} {\Phi}_2 \to \mathbb{C},~\ket{\varphi}_{\mathcal{H}_1 \bar{\otimes} \mathcal{H}_2}(\phi)=\langle \phi,\, \varphi\rangle_{\mathcal{H}_1 \bar {\otimes} \mathcal{H}_2}.    
    \label{eqn:bra-ket_definition}
\end{eqnarray}
It is apparent that the relations $\ket{\varphi}_{\mathcal{H}_1 \bar{\otimes} \mathcal{H}_2}=\bra{\varphi}_{\mathcal{H}_1 \bar{\otimes} \mathcal{H}_2}^*$, $\bra{\varphi}_{\mathcal{H}_1 \bar{\otimes} \mathcal{H}_2} \in (\Phi_1 \hat{\otimes} {\Phi}_2)^{\prime}$, and $\ket{\varphi}_{\mathcal{H}_1 \bar{\otimes} \mathcal{H}_2} \in (\Phi_1 \hat{\otimes} {\Phi}_2)^{\times}$, are satisfied.
Besides, for any $\varphi_1 \in \Phi_1$ and $\varphi_2 \in \Phi_2$, the relations 
\begin{eqnarray}
    \bra{\varphi_1 \otimes \varphi_2}_{\mathcal{H}_1 \bar{\otimes} \mathcal{H}_2}=\bra{\varphi_1}_{\mathcal{H}_1} \otimes \bra{\varphi_2}_{\mathcal{H}_2},~~
    \ket{\varphi_1 \otimes \varphi_2}_{\mathcal{H}_1 \bar{\otimes} \mathcal{H}_2}=\ket{\varphi_1}_{\mathcal{H}_1} \otimes \ket{\varphi_2}_{\mathcal{H}_2},
\label{eqn:bra_ket_relation}
\end{eqnarray}
 are obtained\cite{Ohmori2024}.
They show that the bra (ket) vector for the type $\varphi=\varphi_1\otimes \varphi_2$ can be represented mathematically by the tensor product of the bra (ket) vectors constructed in each RHS.
%


We now establish the $\eta_1\otimes\eta_2$-RHS from the given RHSs, $\Phi_i\subset\mathcal{H}_i\subset \Phi_i^{\prime},\Phi_i^{\times}$ ($i=1,2$).
Let us suppose that each RHS possesses a positive operator $\eta_i$ defined on the Hilbert space $\mathcal{H}_i=(\mathcal{H}_i,\langle \cdot, \cdot \rangle_ {\mathcal{H}_i})$ such that $\eta_i$ is continuous on $\Phi_i=(\Phi_i,\tau_{\Phi_i})$ and satisfies $\eta_i\Phi_i\subset{\Phi_i}$.
In terms of the tensor product operator $\eta_1 \otimes \eta_2$ of $\eta_1$ and $\eta_2$, the following lemma is presented.

\begin{lemma}
\label{lemma2.1}
In the tensor product of the RHS (\ref{eqn:2-00}), the tensor product operator $\eta_1 \otimes \eta_2$ of $\eta_1$ and $\eta_2$ is a positive operator on $\mathcal{H}_1 \bar {\otimes} \mathcal{H}_2=(\mathcal{H}_1 \bar{\otimes} \mathcal{H}_2,\langle \cdot,\, \cdot\rangle_{\mathcal{H}_1 \bar{\otimes} \mathcal{H}_2})$ such that $\eta_1 \otimes \eta_2$ is continuous on $(\Phi_1 \hat{\otimes} {\Phi}_2,\widehat{\tau_p})$ and
$\eta_1\otimes \eta_2(\Phi_1 \hat{\otimes} {\Phi}_2)\subset (\Phi_1 \hat{\otimes} {\Phi}_2)$ holds.
\end{lemma}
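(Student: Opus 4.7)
The plan is to verify the three assertions in turn: (i) that $\eta_1\otimes\eta_2$ extends to a positive operator on $\mathcal{H}_1\bar{\otimes}\mathcal{H}_2$, (ii) that it is continuous on $(\Phi_1\hat{\otimes}\Phi_2,\widehat{\tau_p})$, and (iii) that it preserves $\Phi_1\hat{\otimes}\Phi_2$. First I would define $\eta_1\otimes\eta_2$ on the algebraic tensor product $\mathcal{H}_1\otimes\mathcal{H}_2$ as the bilinear extension of $\varphi_1\otimes\varphi_2\mapsto\eta_1\varphi_1\otimes\eta_2\varphi_2$, check well-definedness, and then extend to the Hilbert completion by continuity.

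For positivity, the cleanest route is the square-root trick: since each $\eta_i$ is positive on $\mathcal{H}_i$, it is self-adjoint and admits a positive self-adjoint square root $\eta_i^{1/2}$. Setting $S=\eta_1^{1/2}\otimes\eta_2^{1/2}$, one checks on simple tensors (and extends by continuity) that $S$ is self-adjoint on $\mathcal{H}_1\bar{\otimes}\mathcal{H}_2$ and that $S^2=\eta_1\otimes\eta_2$. Then for any $\xi\in\mathcal{H}_1\bar{\otimes}\mathcal{H}_2$, $\langle\xi,(\eta_1\otimes\eta_2)\xi\rangle_{\mathcal{H}_1\bar{\otimes}\mathcal{H}_2}=\|S\xi\|^2\ge 0$. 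An alternative bypassing spectral calculus is to write, for $\xi=\sum_j\varphi_{1j}\otimes\varphi_{2j}$, the quadratic form $\sum_{j,k}\langle\varphi_{1j},\eta_1\varphi_{1k}\rangle_{\mathcal{H}_1}\langle\varphi_{2j},\eta_2\varphi_{2k}\rangle_{\mathcal{H}_2}$, which is the sum of the entries of the Hadamard product of two positive semi-definite Gram matrices and hence non-negative by the Schur product theorem; density then carries positivity to the completion.

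For continuity on $(\Phi_1\hat{\otimes}\Phi_2,\widehat{\tau_p})$, I would work directly with the local base $\mathcal{B}_p=\{\Gamma(V_1\otimes V_2)\mid V_i\in\mathcal{B}_i\}$. Given $\Gamma(V_1\otimes V_2)\in\mathcal{B}_p$, continuity of each $\eta_i$ on $(\Phi_i,\tau_{\Phi_i})$ yields $U_i\in\mathcal{B}_i$ with $\eta_i U_i\subset V_i$. Linearity of $\eta_1\otimes\eta_2$ gives $(\eta_1\otimes\eta_2)(U_1\otimes U_2)\subset V_1\otimes V_2$, and since a linear map commutes with the formation of convex circled hulls, $(\eta_1\otimes\eta_2)\bigl(\Gamma(U_1\otimes U_2)\bigr)\subset\Gamma(V_1\otimes V_2)$. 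Continuity with respect to $\tau_p$ then extends uniquely to $\widehat{\tau_p}$ on the completion. Invariance is immediate on the algebraic level, since $\eta_i\Phi_i\subset\Phi_i$ forces $(\eta_1\otimes\eta_2)(\Phi_1\otimes\Phi_2)\subset\Phi_1\otimes\Phi_2$; by the continuity just established together with completeness, this inclusion propagates to $(\eta_1\otimes\eta_2)(\Phi_1\hat{\otimes}\Phi_2)\subset\Phi_1\hat{\otimes}\Phi_2$.

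The main obstacle is the passage from the algebraic tensor product to the two distinct completions: ensuring that $\eta_1\otimes\eta_2$ extends consistently on both, and that positivity, continuity, and invariance all propagate through the completion steps without circularity. In particular, positivity must be verified for elements that are only limits of finite sums of simple tensors, and this is precisely where the factorisation $\eta_1\otimes\eta_2=S^{*}S$ afforded by the square-root construction becomes indispensable.
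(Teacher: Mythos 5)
Your proposal is correct, but it takes a genuinely different route from the paper's proof on both main points. For positivity, the paper argues via the spectral mapping theorem for tensor-product operators: since each $Sp(\eta_i)\subset[0,+\infty)$, it concludes $Sp(\eta_1\otimes\eta_2)=Cl\{\lambda\mu \,;\, \lambda\in Sp(\eta_1),\ \mu\in Sp(\eta_2)\}\subset[0,+\infty)$, and this works directly on the completion with no density step. Your factorization $\eta_1\otimes\eta_2=S^{*}S$ with $S=\eta_1^{1/2}\otimes\eta_2^{1/2}$, or equivalently the Schur-product computation on the two Gram matrices, instead exhibits the quadratic form as manifestly nonnegative; this is more elementary in that it avoids spectral theory of tensor products (and it is legitimate here: each $\eta_i$, being positive and everywhere defined, is bounded by Hellinger--Toeplitz, so $\eta_i^{1/2}$ exists). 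For continuity and invariance, the paper is brief by citation: it notes $\eta_i|\Phi_i$ is continuous, invokes the extension theorem for locally convex spaces (Jarchow) to obtain the unique continuous extension $\eta_1|\Phi_1\hat{\otimes}\,\eta_2|\Phi_2$ on $(\Phi_1\hat{\otimes}\Phi_2,\widehat{\tau_p})$, and identifies it with $\eta_1\otimes\eta_2|\Phi_1\hat{\otimes}\Phi_2$ by uniqueness; your hands-on verification on the local base $\mathcal{B}_p$, using that a linear map carries $\Gamma(U_1\otimes U_2)$ into $\Gamma(V_1\otimes V_2)$, is in effect a self-contained proof of the fact the paper imports, and your completeness argument for invariance matches the paper's logic. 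The consistency issue you rightly flag as the main obstacle is resolved, in both treatments, by one observation worth making explicit: $\widehat{\tau_p}$ is finer than the topology induced on $\Phi_1\hat{\otimes}\Phi_2$ by the norm of $\mathcal{H}_1\bar{\otimes}\mathcal{H}_2$, so the $\widehat{\tau_p}$-extension and the restriction of the Hilbert-space operator are both continuous into a Hausdorff space and agree on the dense subspace $\Phi_1\otimes\Phi_2$, hence coincide --- this is precisely the content of the paper's appeal to uniqueness. One small overstatement in your closing remark: the $S^{*}S$ factorization is not \emph{indispensable} for positivity at limit points, since continuity of the quadratic form plus density already propagates nonnegativity (as your own Schur-product alternative shows), and the paper's spectral argument bypasses the issue entirely.
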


\begin{proof}
From the positivity of $\eta_i$, it is apparent that each spectrum $Sp(\eta_i)$ of $\eta_i$ lies on the real half line $[0,+\infty)$.
Then $Sp(\eta_1\otimes\eta_2)
=Cl\{\lambda\mu ; \lambda\in Sp(\eta_1),\mu \in Sp(\eta_2)\}\subset Cl\{\lambda\mu;\lambda,\mu \in [0, +\infty)\}\subset [0,+\infty)$, where $Cl$ shows the closure of $[0,+\infty)$, which causes $\eta_1\otimes\eta_2$ to be positive.
We will show the continuity of $\eta_1\otimes\eta_2$ on $(\Phi_1 \hat{\otimes} {\Phi}_2,\widehat{\tau_p})$.
As each $\eta_i$ is continuous linear on $(\Phi_i,\tau_{\Phi_i})$ and satisfies $\eta_i\Phi_i\subset \Phi_i$, 
the restriction $\eta_i|\Phi_i : (\Phi_i,\tau_{\Phi_i})\to (\Phi_i,\tau_{\Phi_i})$ is continuous linear.
The extension theorem for locally convex spaces\cite{Jarchow} gives us the unique continuous linear extension of $\eta_1|\Phi_1 \otimes\eta_2|\Phi_2$ on $(\Phi_1 \hat{\otimes} {\Phi}_2,\widehat{\tau_p})$, 
$\eta_1|\Phi_1 \hat{\otimes}\eta_2|\Phi_2 : (\Phi_1 \hat{\otimes} {\Phi}_2,\widehat{\tau_p}) \to (\Phi_1 \hat{\otimes} {\Phi}_2,\widehat{\tau_p})$.
By the uniqueness, $\eta_1\otimes\eta_2|\Phi_1 \hat{\otimes} {\Phi}_2=\eta_1|\Phi_1 \hat{\otimes}\eta_2|\Phi_2$.
The proof is thus complete.
\end{proof}
\noindent
The positivity of the operator $\eta_1 \otimes \eta_2$ provides an RHS with a different metric from the RHS (\ref{eqn:2-00}), says the $\eta_1 \otimes \eta_2$-RHS\cite{Ohmori2022}.
This RHS can be expressed as the triplet of the following topological vector spaces:
\begin{eqnarray}
    \Phi_1 \hat{\otimes} {\Phi}_2 \subset 
    (
    \mathcal{H}_1 \widetilde{\bar {\otimes}} \mathcal{H}_2)_{\eta_1\otimes\eta_2}
    \subset 
    (\Phi_1 \hat{\otimes} {\Phi}_2)^{\prime},~(\Phi_1 \hat{\otimes} {\Phi}_2)^{\times},
    \label{eqn:2-2-1}
\end{eqnarray}
where $\Phi_1 \hat{\otimes} {\Phi}_2$, $(\Phi_1 \hat{\otimes} {\Phi}_2)^{\prime}$, and $(\Phi_1 \hat{\otimes} {\Phi}_2)^{\times}$ coincide with  (\ref{eqn:2-00}).
$(
    \mathcal{H}_1 \widetilde{\bar {\otimes}} \mathcal{H}_2)_{\eta_1\otimes\eta_2}$
represents the completion of the pre-Hilbert space 
$(
    \mathcal{H}_1  \bar{\otimes} \mathcal{H}_2)_{\eta_1\otimes\eta_2}
    =
    (\mathcal{H}_1 \bar{\otimes} \mathcal{H}_2,{\langle \cdot, \cdot \rangle_{\eta_1\otimes\eta_2}})$,
where ${\langle \cdot, \cdot \rangle_{\eta_1\otimes\eta_2}}$ is the inner product defined by
\begin{equation}
    \langle \phi, \psi \rangle_{\eta_1\otimes\eta_2}
    = \langle \phi,\, \eta_1\otimes\eta_2\psi\rangle_{\mathcal{H}_1 \bar{\otimes} \mathcal{H}_2} ~~~(\phi,\psi \in \mathcal{H}_1 \bar{\otimes} \mathcal{H}_2).
    \label{eqn:norm}
\end{equation}
Moreover, when the positive operator $\eta_1\otimes \eta_2$ on $\mathcal{H}_1 \bar{\otimes} \mathcal{H}_2$ is invertible, $\langle \cdot, \cdot \rangle_{\eta_1\otimes\eta_2}$ is equivalent to $\langle \cdot, \cdot \rangle_{\mathcal{H}_1 \bar{\otimes} \mathcal{H}_2}$, 
and (\ref{eqn:2-2-1}) is identified with 
\begin{eqnarray}
    \Phi_1 \hat{\otimes} {\Phi}_2 \subset 
    (
    \mathcal{H}_1 \bar {\otimes} \mathcal{H}_2)_{\eta_1\otimes\eta_2}
    \subset 
    (\Phi_1 \hat{\otimes} {\Phi}_2)^{\prime},~(\Phi_1 \hat{\otimes} {\Phi}_2)^{\times}.
    \label{eqn:eta_RHS}
\end{eqnarray}
Note that, if $\eta_1$ and $\eta_2$ are both invertible, then $\eta_1\otimes\eta_2$ becomes invertible; $(\eta_1\otimes\eta_2)^{-1}=\eta_1^{-1}\otimes\eta_2^{-1}$ holds.
Using $\eta_1 \otimes \eta_2$-RHS, the bra $\bra{\varphi}_{\eta_1\otimes\eta_2}$ and ket $\ket{\varphi}_{\eta_1\otimes\eta_2}$ vectors for each $\varphi \in \Phi_1 \hat{\otimes} {\Phi}_2$ can be defined as well as (\ref{eqn:bra-ket_definition}).
In particular, $\bra{\varphi}_{\eta_1\otimes\eta_2} (\phi)=\langle \eta_1\otimes\eta_2\varphi, \, \phi\rangle_{\mathcal{H}_1\bar{\otimes}\mathcal{H}_2}$ for any $\phi \in \Phi_1 \hat{\otimes} {\Phi}_2$,
and $\ket{\varphi}_{\eta_1\otimes\eta_2}$ is its complex conjugate.
Note that
\begin{eqnarray}
    \bra{\varphi}_{\eta_1\otimes\eta_2} =\bra{\eta_1\otimes\eta_2 \varphi}_{\mathcal{H}_1\bar{\otimes}\mathcal{H}_2}
    \textrm{ and }
    \ket{\varphi}_{\eta_1\otimes\eta_2} =\ket{\eta_1\otimes\eta_2 \varphi}_{\mathcal{H}_1\bar{\otimes}\mathcal{H}_2}
    \label{eqn:2-2-3}
\end{eqnarray}
are satisfied.

We can consider another construction of $\eta_1\otimes\eta_2$-RHS.
Let us set a pair of $\eta_i$-RHS, $\Phi_i \subset (\mathcal{H}_i)_{\eta_i} \subset \Phi_i^\prime,\Phi_i^\times$, characterized by the invertible positive operator $\eta_i$, where $(\mathcal{H}_i)_{\eta_i}=(\mathcal{H}_i, \langle \cdot, \cdot \rangle_{\mathcal{H}_i \eta_i})$ is the Hilbert space equipping the inner product $\langle \cdot, \cdot \rangle_{\mathcal{H}_i \eta_i}=\langle \cdot, \eta_i \cdot \rangle_{\mathcal{H}_i}$
($i=1,2$)\cite{Ohmori2022}.
From (\ref{eqn:2-00}), the tensor product is the RHS represented by
\begin{eqnarray}
    \Phi_1 \hat{\otimes} {\Phi}_2 
    \subset 
    (\mathcal{H}_1)_{\eta_1}\bar{\otimes}(\mathcal{H}_2)_{\eta_2}
    \subset 
    (\Phi_1 \hat{\otimes} {\Phi}_2)^{\prime},~(\Phi_1 \hat{\otimes} {\Phi}_2)^{\times}.
    \label{eqn:eta_RHS_2}
\end{eqnarray}
This RHS coincides exactly with (\ref{eqn:eta_RHS}).
In fact, the following theorem is obtained.
\begin{theorem}
\label{theorem2.2}
Let $(\mathcal{H}_1 \bar {\otimes} \mathcal{H}_2)_{\eta_1\otimes\eta_2}=(\mathcal{H}_1 \bar {\otimes} \mathcal{H}_2, \langle \cdot, \cdot \rangle_{\eta_1\otimes\eta_2})$ be the Hilbert space with the inner product (\ref{eqn:norm})  and let $(\mathcal{H}_1)_{\eta_1}$ and $(\mathcal{H}_2)_{\eta_2}$ be the Hilbert spaces induced by the positive invertible operators $\eta_1$ and $\eta_2$, respectively, where  $(\mathcal{H}_i)_{\eta_i}=(\mathcal{H}_i, \langle \cdot, \cdot \rangle_{\mathcal{H}_i \eta_i})~(i=1,2)$.
Then the tensor product, $(\mathcal{H}_1)_{\eta_1}\bar {\otimes}(\mathcal{H}_2)_{\eta_2}=(\mathcal{H}_1\bar{\otimes}\mathcal{H}_2,\langle \cdot, \cdot \rangle_{\mathcal{H}_1\eta_1\bar{\otimes} \mathcal{H}_2\eta_2})$,
of $(\mathcal{H}_1)_{\eta_1}$ and $(\mathcal{H}_2)_{\eta_2}$ coincides with $(\mathcal{H}_1 \bar {\otimes} \mathcal{H}_2)_{\eta_1\otimes\eta_2}$, namely, the relation 
\begin{eqnarray}
(\mathcal{H}_1)_{\eta_1}\bar {\otimes}(\mathcal{H}_2)_{\eta_2}
=(\mathcal{H}_1 \bar {\otimes} \mathcal{H}_2)_{\eta_1\otimes\eta_2}
    \label{eqn:relation2-2}
\end{eqnarray}
holds.
\end{theorem}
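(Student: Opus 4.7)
The strategy is to verify that the two Hilbert space structures on $\mathcal{H}_1\bar{\otimes}\mathcal{H}_2$ have the same underlying set and the same inner product, so the identity map is an isometric isomorphism.

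First I would compare the two inner products on elementary tensors. For $\varphi_i,\phi_i \in \mathcal{H}_i$ ($i=1,2$), the definition of $\langle\cdot,\cdot\rangle_{\mathcal{H}_1 \eta_1\bar{\otimes}\mathcal{H}_2\eta_2}$ on simple tensors gives
\begin{equation*}
\langle \varphi_1\otimes\varphi_2,\phi_1\otimes\phi_2\rangle_{\mathcal{H}_1 \eta_1\bar{\otimes}\mathcal{H}_2\eta_2}
=\langle\varphi_1,\eta_1\phi_1\rangle_{\mathcal{H}_1}\langle\varphi_2,\eta_2\phi_2\rangle_{\mathcal{H}_2},
\end{equation*}
while the right-hand side inner product evaluates, by (\ref{eqn:norm}) and the defining property of $\langle\cdot,\cdot\rangle_{\mathcal{H}_1\bar{\otimes}\mathcal{H}_2}$, to
\begin{equation*}
\langle\varphi_1\otimes\varphi_2,(\eta_1\otimes\eta_2)(\phi_1\otimes\phi_2)\rangle_{\mathcal{H}_1\bar{\otimes}\mathcal{H}_2}
=\langle\varphi_1,\eta_1\phi_1\rangle_{\mathcal{H}_1}\langle\varphi_2,\eta_2\phi_2\rangle_{\mathcal{H}_2}.
\end{equation*}
Hence the two inner products coincide on simple tensors, and by sesquilinearity they agree on the whole algebraic tensor product $\mathcal{H}_1\otimes\mathcal{H}_2$.

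Next I would argue that the induced norms on $\mathcal{H}_1\otimes\mathcal{H}_2$ are equivalent to the standard tensor norm $\|\cdot\|_{\mathcal{H}_1\bar{\otimes}\mathcal{H}_2}$. Since each $\eta_i$ is positive and invertible on the Hilbert space $\mathcal{H}_i$, the closed graph (or open mapping) theorem, together with $\eta_i$ being bounded, implies that $\eta_i^{-1}$ is bounded. Thus there exist constants $c_i,C_i>0$ with $c_i\|\psi\|_{\mathcal{H}_i}^2\le\langle\psi,\eta_i\psi\rangle_{\mathcal{H}_i}\le C_i\|\psi\|_{\mathcal{H}_i}^2$; this shows the norm of $(\mathcal{H}_i)_{\eta_i}$ is equivalent to that of $\mathcal{H}_i$, so $(\mathcal{H}_i)_{\eta_i}$ and $\mathcal{H}_i$ are the same set. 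Taking tensor products, the same equivalence holds on $\mathcal{H}_1\otimes\mathcal{H}_2$ between $\|\cdot\|_{\mathcal{H}_1\eta_1\bar{\otimes}\mathcal{H}_2\eta_2}$ and $\|\cdot\|_{\mathcal{H}_1\bar{\otimes}\mathcal{H}_2}$, and by Lemma~\ref{lemma2.1} and invertibility of $\eta_1\otimes\eta_2$, the same holds for $\|\cdot\|_{\eta_1\otimes\eta_2}$.

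Finally, since both norms on $\mathcal{H}_1\otimes\mathcal{H}_2$ are equivalent to $\|\cdot\|_{\mathcal{H}_1\bar{\otimes}\mathcal{H}_2}$, the completions are the same set, namely $\mathcal{H}_1\bar{\otimes}\mathcal{H}_2$, and the extension of the equality of inner products from the dense subspace $\mathcal{H}_1\otimes\mathcal{H}_2$ to the completion follows by the continuity of each inner product. This gives (\ref{eqn:relation2-2}). The main subtle point, which I expect to be the principal obstacle, is the justification that the identifications of completions are legitimate: specifically, one must carefully confirm that $(\eta_1\otimes\eta_2)^{-1}=\eta_1^{-1}\otimes\eta_2^{-1}$ is bounded on $\mathcal{H}_1\bar{\otimes}\mathcal{H}_2$, ensuring equivalence of the two norms rather than merely one-sided domination; the remainder of the argument is then a routine continuity extension.
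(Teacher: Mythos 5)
Your proposal is correct and follows essentially the same route as the paper's own proof: equivalence of the $\eta_i$-norms with the original norms (via positivity and bounded invertibility of $\eta_i$ and $\eta_1\otimes\eta_2$), identification of the completions, and extension of the inner-product equality by density and continuity. If anything, you are slightly more careful than the paper, since you verify explicitly that the two inner products agree on simple tensors --- a computation the paper's proof leaves implicit --- and you correctly flag the boundedness of $(\eta_1\otimes\eta_2)^{-1}=\eta_1^{-1}\otimes\eta_2^{-1}$, which the paper records just before the theorem.
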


\begin{proof}
As each $\eta_i$ is a positive invertible operator, $\| \cdot \|_{\eta_i} \equiv \langle \cdot, \cdot \rangle_{\eta_i}^\frac{1}{2}$ is the equivalent norm with $\| \cdot \|_{\mathcal{H}_i} \equiv \langle \cdot, \cdot \rangle_{\mathcal{H}_i}^\frac{1}{2}$ in 
$\mathcal{H}_i$.
It follows that the algebraic tensor $\mathcal{H}_1\otimes \mathcal{H}_2$ possesses the equivalent norms 
$\|\cdot \|_{\mathcal{H}_1\eta_1\otimes \mathcal{H}_2\eta_2} \equiv \langle \cdot, \cdot \rangle_{\mathcal{H}_1\eta_1\otimes \mathcal{H}_2\eta_2}^\frac{1}{2}$ and $\|\cdot\| _{\mathcal{H}_1\otimes \mathcal{H}_2}  \equiv \langle \cdot, \cdot \rangle_{\mathcal{H}_1\otimes \mathcal{H}_2}^\frac{1}{2}$.
In addition, their completion $(\mathcal{H}_1)_{\eta_1}\bar {\otimes}(\mathcal{H}_2)_{\eta_2}=(\mathcal{H}_1\bar{\otimes}\mathcal{H}_2,\langle \cdot, \cdot \rangle_{\mathcal{H}_1\eta_1\bar{\otimes} \mathcal{H}_2\eta_2})$ and $(\mathcal{H}_1\bar{\otimes}\mathcal{H}_2,\langle \cdot, \cdot \rangle_{\mathcal{H}_1\bar{\otimes}\mathcal{H}_2})$ are also equivalent.
On the other hand, as $\eta_1\otimes\eta_2$
is the positive invertible operator, 
the norm $\|\cdot\|_{\eta_1\otimes\eta_2}\equiv \langle \cdot, \cdot \rangle_{\eta_1\otimes\eta_2}^\frac{1}{2}$ equivalent to  
$\|\cdot\|_{\mathcal{H}_1\bar{\otimes}\mathcal{H}_2}$ in $\mathcal{H}_1\bar{\otimes}\mathcal{H}_2$.
Thus, $\|\cdot\|_{\eta_1\otimes\eta_2}$ and $\|\cdot\|_{\mathcal{H}_1\eta_1\bar{\otimes} \mathcal{H}_2\eta_2}$ are equivalent, 
which indicates that the identity map $i:\varphi \mapsto \varphi$ on $\mathcal{H}_1\bar{\otimes}\mathcal{H}_2$ is a homeomorphism between the topologies induced from $\|\cdot\|_{\eta_1\otimes\eta_2}$ and $\|\cdot\|_{\mathcal{H}_1\eta_1\bar{\otimes} \mathcal{H}_2\eta_2}$. 
Let $\varphi\in (\mathcal{H}_1 \bar {\otimes} \mathcal{H}_2)_{\eta_1\otimes\eta_2}$; 
$\varphi$ is represented by $\varphi=\sum_{i=1}^\infty\varphi_i^1\otimes\varphi_i^2$ where $\{\varphi_i^1\}\subset \mathcal{H}_1$, $\{\varphi_i^2\}\subset \mathcal{H}_2$, 
and 
$\sum_{i=1}^\infty$ converges with respect to the norm $\|\cdot\|_{\eta_1\otimes\eta_2}$. 
As the identity map is homeomorphic, 
$\varphi=i(\varphi)=\sum_{i=1}^\infty\varphi_i^1\otimes\varphi_i^2$ where the sum $\sum_{i=1}^\infty$ converging with respect to $\|\cdot\|_{\mathcal{H}_1\eta_1\bar{\otimes} \mathcal{H}_2\eta_2}$.
Therefore, we obtain 
$\langle \varphi, \psi \rangle_{\eta_1\otimes\eta_2}=\langle \varphi, \psi \rangle_{\mathcal{H}_1\eta_1\bar{\otimes} \mathcal{H}_2\eta_2}$ for any $\varphi,\psi\in \mathcal{H}_1\bar{\otimes}\mathcal{H}_2$.
\end{proof}
\noindent
This theorem is utilized for discussing the spectral expansion of the quasi-Hermitian operator in a composite system in Sec. 3.

\subsection{Extension}
\label{sec:2.2}

For a given RHS, $\Phi\subset\mathcal{H}\subset\Phi^{\prime},\Phi^{\times}$, any linear operator $A:\mathcal{D}(A)\to \mathcal{H}$ on its domain $\mathcal{D}$ with the condition wherein $A$ is continuous on $\Phi$ and satisfies $A\Phi \subset \Phi$ can be always extended to the dual space $\Phi^{\prime}$ and the anti-dual space $\Phi^{\times}$\cite{Madrid2005}.
Such extensions $\hat{A}^{\prime}:\Phi' \rightarrow \Phi'$ and $\hat{A}^{\times}:\Phi^{\times} \rightarrow \Phi^{\times}$ are defined by
$(\hat {A}^j (f))(\phi):=f(A(\phi))$~$(j=\prime,\times)$,
for any $\phi\in\Phi$ and $f\in \Phi^j$.
Moreover, noting $\Phi^{\prime} \cap \Phi^{\times}=\{\hat{0}\}$ ($\hat{0}$ represents the zero-valued functional on $\Phi$),
these mappings can be combined into one operator $\hat{A} : \Phi^{\prime} \cup \Phi^{\times}\to\Phi^{\prime} \cup \Phi^{\times}$,
where
\begin{eqnarray}
(\hat{A} (f))(\phi)=f(A (\phi))
    \label{eqn:extension_definition}
\end{eqnarray}
for $f\in \Phi^{\prime} \cup \Phi^{\times}$ and $\phi\in\Phi$.
When $f\in \Phi^{j}$, we obtain $\hat{A} (f) \in \Phi^{j}$ ($j=\prime,\times$), and the relation
$\hat{A} (f^*)={\hat{A} ({f})}^*$ is satisfied.

This procedure of construction of the extension (\ref{eqn:extension_definition}) can be immediately applied to the positive operator $\eta_1 \otimes \eta_2$ in  (\ref{eqn:2-2-1}) by Lemma \ref{lemma2.1}.
The extension of $\eta_1\otimes\eta_2$ to $(\Phi_1 \hat{\otimes} {\Phi}_2)^{\prime}$ and $(\Phi_1 \hat{\otimes} {\Phi}_2)^{\times}$ is denoted by $\widehat{\eta_1\otimes\eta_2} : (\Phi_1 \hat{\otimes} {\Phi}_2)^{\prime}\cup(\Phi_1 \hat{\otimes} {\Phi}_2)^{\times}\to (\Phi_1 \hat{\otimes} {\Phi}_2)^{\prime}\cup(\Phi_1 \hat{\otimes} {\Phi}_2)^{\times},
$
where $\widehat{\eta_1\otimes\eta_2}$ is given by
\begin{eqnarray}
    (\widehat{\eta_1\otimes\eta_2} (f))(\phi)=f({\eta_1\otimes\eta_2} (\phi))
    \label{eqn:2-ext1}
\end{eqnarray}
for $f\in (\Phi_1 \hat{\otimes} {\Phi}_2)^{\prime}\cup(\Phi_1 \hat{\otimes} {\Phi}_2)^{\times}$ and $\phi\in \Phi_1\hat{\otimes}\Phi_2$.
This satisfies the relations, $\widehat{\eta_1\otimes\eta_2}(\Phi_1 \hat{\otimes} {\Phi}_2)^{j}\subset (\Phi_1 \hat{\otimes} {\Phi}_2)^{j}$~$(j=\prime,\times)$,  
$\bra{\varphi}_{\mathcal{H}_1\bar{\otimes}\mathcal{H}_2}\in (\Phi_1 \hat{\otimes} {\Phi}_2)^{\prime}$,
$\ket{\varphi}_{\mathcal{H}_1\bar{\otimes}\mathcal{H}_2}\in (\Phi_1 \hat{\otimes} {\Phi}_2)^{\times}$, 
and $\widehat{\eta_1\otimes\eta_2}(f^*)=\widehat{\eta_1\otimes\eta_2}(f)^*$.
Furthermore,
\begin{eqnarray}
    \bra{\varphi}_{\mathcal{H}_1\bar{\otimes}\mathcal{H}_2}\widehat{\eta_1\otimes\eta_2}=\bra{\varphi}_{\eta_1\otimes\eta_2},
~~    \widehat{\eta_1\otimes\eta_2}\ket{\varphi}_{\mathcal{H}_1\bar{\otimes}\mathcal{H}_2}=\ket{\varphi}_{\eta_1\otimes\eta_2}
    \label{eqn:2-ext2}
\end{eqnarray}
are confirmed from (\ref{eqn:2-2-3}), 
where $\bra{\varphi}_{\mathcal{H}_1\bar{\otimes}\mathcal{H}_2}\widehat{\eta_1\otimes\eta_2}$ denotes
$\widehat{\eta_1\otimes\eta_2}(\bra{\varphi}_{\mathcal{H}_1\bar{\otimes}\mathcal{H}_2})$.
The relations in (\ref{eqn:2-ext2}) present the transformation between the bra and ket vectors of $\mathcal{H}_1\bar{\otimes}\mathcal{H}_2$-system into those of the $(\mathcal{H}_1 \widetilde{\bar {\otimes}} \mathcal{H}_2)_{\eta_1\otimes\eta_2}$-system.
In particular, for $\varphi_1\otimes\varphi_2\in\Phi_1\otimes\Phi_2$, we have
\begin{eqnarray}
    \bra{\varphi_1\otimes\varphi_2}_{\mathcal{H}_1\bar{\otimes}\mathcal{H}_2}\widehat{\eta_1\otimes\eta_2} 
    & = & \bra{\varphi_1\otimes\varphi_2}_{\eta_1 \otimes \eta_2}
     =  \bra{\eta_1\varphi_1\otimes\eta_2\varphi_2}_{\mathcal{H}_1\bar{\otimes}\mathcal{H}_2}
     \nonumber\\
     & = &
     \bra{\eta_1\varphi_1}_{\mathcal{H}_1}\otimes\bra{\eta_2\varphi_2}_{\mathcal{H}_2}
      = 
     \bra{\varphi_1}_{\eta_1}\otimes\bra{\varphi_2}_{\eta_2}
     \nonumber\\
     & = &
     \bra{\varphi_1}_{\mathcal{H}_1}\hat{\eta}_1\otimes\bra{\varphi_2}_{\mathcal{H}_2}\hat{\eta}_2,
    \label{eqn:2-ext3a}
\end{eqnarray}
\begin{eqnarray}
    \widehat{\eta_1\otimes\eta_2} 
    \ket{\varphi_1\otimes\varphi_2}_{\mathcal{H}_1\bar{\otimes}\mathcal{H}_2}
    & = & \ket{\varphi_1\otimes\varphi_2}_{\eta_1 \otimes \eta_2}
     =  \ket{\eta_1\varphi_1\otimes\eta_2\varphi_2}_{\mathcal{H}_1\bar{\otimes}\mathcal{H}_2}
     \nonumber\\
     & = &
     \ket{\eta_1\varphi_1}_{\mathcal{H}_1}\otimes\ket{\eta_2\varphi_2}_{\mathcal{H}_2}
     = 
     \ket{\varphi_1}_{\eta_1}\otimes\ket{\varphi_2}_{\eta_2}
    \nonumber\\
     & = &
     \hat{\eta}_1\ket{\varphi_1}_{\mathcal{H}_1}\otimes\hat{\eta}_2\ket{\varphi_2}_{\mathcal{H}_2},
     \label{eqn:2-ext3b}
\end{eqnarray}
where $\hat{\eta}_i$ is the extension of $\eta_i$ to $\Phi_i^\prime\cup\Phi_i^\times$, which is constructed using (\ref{eqn:extension_definition}).
When each $\eta_i$ is invertible, 
the inverse operator $\widehat{\eta_1\otimes\eta_2}^{-1}$ of $\widehat{\eta_1\otimes\eta_2}$ can be defined satisfying $\widehat{\eta_1\otimes\eta_2}^{-1}=\widehat{(\eta_1\otimes\eta_2)^{-1}}$ 
where $
[\widehat{(\eta_1\otimes\eta_2)^{-1}} (f)](\phi)=f((\eta_1\otimes\eta_2)^{-1} (\phi))
$
for $f\in (\Phi_1 \hat{\otimes} {\Phi}_2)^{\prime}\cup(\Phi_1 \hat{\otimes} {\Phi}_2)^{\times}$ and $\phi\in \Phi_1\hat{\otimes}\Phi_2$.
Then, the inverse relations of (\ref{eqn:2-2-3}) and (\ref{eqn:2-ext2}) are as follows:
\begin{eqnarray}
    \bra{\varphi}_{\mathcal{H}_1 \bar{\otimes} \mathcal{H}_2}
    & = &
    \bra{(\eta_1\otimes\eta_2)^{-1}\varphi}_{\eta_1{\otimes}\eta_2}
    =
    \bra{\varphi}_{\eta_1{\otimes}\eta_2}\widehat{\eta_1\otimes\eta_2}^{-1}, 
    \label{eqn:2-ext40a}
    \\
    \ket{\varphi}_{\mathcal{H}_1 \bar{\otimes} \mathcal{H}_2}
    & = &
    \ket{(\eta_1\otimes\eta_2)^{-1}\varphi}_{\eta_1{\otimes}\eta_2}
    =
    \widehat{\eta_1\otimes\eta_2}^{-1}\ket{\varphi}_{\eta_1{\otimes}\eta_2}.
     \label{eqn:2-ext40b}
\end{eqnarray}
Similarly, for $\varphi_1\otimes\varphi_2\in\Phi_1\otimes\Phi_2$, we have the inverse relations of (\ref{eqn:2-ext3a}) and (\ref{eqn:2-ext3b}),
\begin{eqnarray}
    \bra{\varphi_1\otimes\varphi_2}_{\eta_1{\otimes}\eta_2}\widehat{\eta_1\otimes\eta_2}^{-1} 
    & = & \bra{\varphi_1\otimes\varphi_2}_{\mathcal{H}_1 \bar{\otimes} \mathcal{H}_2}
     =  
     \bra{\varphi_1}_{\mathcal{H}_1}\hat{\eta}_1^{-1}\otimes     \bra{\varphi_2}_{\mathcal{H}_2}\hat{\eta}_2^{-1},
    \label{eqn:2-ext4a}
    \\
    \widehat{\eta_1\otimes\eta_2}^{-1}\ket{\varphi_1\otimes\varphi_2}_{\eta_1{\otimes}\eta_2} 
    & = & \ket{\varphi_1\otimes\varphi_2}_{\mathcal{H}_1 \bar{\otimes} \mathcal{H}_2}
     =  
     \hat{\eta}_1^{-1}\ket{\varphi_1}_{\eta_1}\otimes\hat{\eta}_2^{-1}\ket{\varphi_2}_{\eta_2},
     \label{eqn:2-ext4b}
\end{eqnarray}
where each $\hat{\eta_i}^{-1}$ is the inverse of $\hat{\eta}_i$ on $\Phi_i^\prime\cup\Phi_i^\times$.

%

\section{Spectral expansion of the quasi-Hermitian composite operator}
\label{sec:3}

\subsection{Set up}
\label{sec:3.1}

Let $\mathcal{H}_i=(\mathcal{H}_i,\langle \cdot, \cdot \rangle_{\mathcal{H}_i})$ be a Hilbert space, and let $\eta_i$ be a positive invertible operator on $\mathcal{H}_i$ ($i=1,2$). 
For each $i$, we set an $\eta_i$-quasi Hermitian operator $A_i : \mathcal{D}(A_i)\to \mathcal{H}_i$ on its domain $\mathcal{D}(A_i)$ into a Hilbert space $\mathcal{H}_i$,
where $A_i$ has the symmetric structure for its adjoint $A_i^{\dagger}$\cite{Dieudonne1961,Mos2010,Antoine2013},
\begin{eqnarray}
    A_i^{\dagger}= {\eta_i}
    A_i{\eta_i}^{-1}~~(i=1,2).
    \label{eqn:3-00}
\end{eqnarray}
Here, $\eta_i$ becomes an intertwining operator for $A_i$ and $A_i^\dagger$.
Under the Hilbert space $(\mathcal{H}_i)_{\eta_i}=(\mathcal{H}_i, \langle \cdot, \cdot \rangle_{\mathcal{H}_i \eta_i})$ the $\eta_i$-quasi Hermitian operator is an Hermitian (self-adjoint) operator\cite{Antoine2013}.
%

%
Considering the Hamiltonian of a composite system\cite{Simon1980,Messiah},
we now handle the following type of a linear operator in the tensor product $\mathcal{H}_1 \bar{\otimes}\mathcal{H}_2=(\mathcal{H}_1 \bar{\otimes}\mathcal{H}_2,\langle \cdot, \cdot \rangle_{\mathcal{H}_1 \bar{\otimes}\mathcal{H}_2})$,
\begin{eqnarray}
    A=\overline{A_1\otimes I_2 +I_1\otimes A_2} : \mathcal{D}({A}) \to \mathcal{H}_1 \bar{\otimes}\mathcal{H}_2, 
    \label{eqn:composite_operator}
\end{eqnarray}
 where $A$ is the closure of the operator $A_1\otimes I_2 +I_1\otimes A_2$ in $\mathcal{H}_1 \bar{\otimes}\mathcal{H}_2$ and $I_i$ is the identity map for $\mathcal{H}_i$ $(i=1,2)$.
Note that $A$ is {\it not} self-adjoint in $\mathcal{H}_1 \bar{\otimes}\mathcal{H}_2$ as each $A_i$ is not self-adjoint in $\mathcal{H}_i$.
However, it becomes a self-adjoint operator in $(\mathcal{H}_1 \bar{\otimes}\mathcal{H}_2)_{\eta_1\otimes\eta_2}=(\mathcal{H}_1 \bar {\otimes} \mathcal{H}_2, \langle \cdot, \cdot \rangle_{\eta_1\otimes\eta_2})$.
To demonstrate this, the following lemma is derived. 
\begin{lemma}
\label{lemma3.1}
Let $A:\mathcal{D}(A)\to \mathcal{H}$ be a closable linear operator in a Hilbert space $(\mathcal{H},\langle \cdot, \cdot \rangle_1)$, the closure of which is denoted by $\bar{A}$.
If the norm induced from the inner product $\langle \cdot, \cdot \rangle_2$ of $\mathcal{H}$ is equivalent to that induced from $\langle \cdot, \cdot \rangle_1$, then $A$ is closable with respect to $(\mathcal{H},\langle \cdot, \cdot \rangle_2)$ and its closure coincides with $\bar{A}$.
\end{lemma}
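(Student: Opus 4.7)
The plan is to exploit the fact that closability and closure of an operator are purely topological notions on the graph, depending only on which sequences converge to which limits in $\mathcal{H}$. Since $\|\cdot\|_1$ and $\|\cdot\|_2$ are equivalent norms, they induce identical topologies on $\mathcal{H}$, and in particular the same notion of sequential convergence. So the heart of the argument is just to translate each defining property from norm~1 to norm~2 via this equivalence.

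First I would recall the standard characterization: a linear operator $A:\mathcal{D}(A)\to \mathcal{H}$ on a Hilbert space is closable iff for every sequence $\{\varphi_n\}\subset \mathcal{D}(A)$ with $\varphi_n\to 0$ and $A\varphi_n\to \psi$ one has $\psi=0$. Equivalently, the closure of the graph $\mathrm{Gr}(A)\subset \mathcal{H}\times\mathcal{H}$ is itself the graph of a (single-valued) operator. Because the equivalence of norms gives constants $c,C>0$ with $c\|\cdot\|_1\le\|\cdot\|_2\le C\|\cdot\|_1$, we have $\varphi_n\to 0$ in $(\mathcal{H},\langle\cdot,\cdot\rangle_1)$ iff $\varphi_n\to 0$ in $(\mathcal{H},\langle\cdot,\cdot\rangle_2)$, and similarly $A\varphi_n\to\psi$ in one norm iff it does in the other. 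Hence the closability condition holds with respect to $\langle\cdot,\cdot\rangle_1$ iff it holds with respect to $\langle\cdot,\cdot\rangle_2$.

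Next I would verify that the closures coincide as operators. By definition, $\varphi\in\mathcal{D}(\bar{A})$ and $\bar{A}\varphi=\psi$ iff there exists $\{\varphi_n\}\subset\mathcal{D}(A)$ with $\varphi_n\to\varphi$ and $A\varphi_n\to\psi$ in $\|\cdot\|_1$. By the same equivalence of norms, this is equivalent to the existence of such a sequence with convergence in $\|\cdot\|_2$. Therefore the domain, and the action of the closure, obtained with respect to the two inner products, are identical; this is exactly the claim.

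The step that looks most routine but deserves care is simply the clean statement that on the product space $\mathcal{H}\times\mathcal{H}$, the product norms induced by $\|\cdot\|_1$ and $\|\cdot\|_2$ are equivalent, so $\mathrm{Gr}(A)$ has the same closure in both; this makes the whole argument essentially one line once formulated graph-theoretically. I do not expect any real obstacle here: the lemma is a soft statement asserting that ``closure'' is invariant under replacing an inner product by an equivalent one, and the proof reduces to bookkeeping with the inequalities $c\|\cdot\|_1\le\|\cdot\|_2\le C\|\cdot\|_1$.
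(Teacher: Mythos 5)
Your proposal is correct and follows essentially the same route as the paper: both prove closability by translating the defining sequence condition ($\psi_n\to 0$, $A\psi_n\to\phi$ implies $\phi=0$) through the norm equivalence $c\|\cdot\|_1\le\|\cdot\|_2\le C\|\cdot\|_1$. The only cosmetic difference is in the second half, where the paper deduces $\bar{A}=\bar{A}^{\prime}$ by a two-sided minimality-of-closure argument while you identify the two closures directly via the sequence (graph) characterization of $\mathcal{D}(\bar{A})$; these are interchangeable one-line variants of the same observation.
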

\begin{proof}
Suppose that a sequence $\{\psi_n\}\subset\mathcal{D}(A)$ converges to zero with respect to $\langle \cdot, \cdot \rangle_2$ and $A\psi_n$ converges to $\phi$
with respect to $\langle \cdot, \cdot \rangle_2$, i.e.,
$\|\psi_n\|_2\to 0$ as $n\to \infty$ and 
$\|A\psi_n-\phi\|_2\to 0$ as $n\to \infty$, where $\|\cdot \|_2=\sqrt{\langle \cdot, \cdot \rangle_2}$.
Since $\|\cdot \|_2$ is equivalent to $\|\cdot \|_1=\sqrt{\langle \cdot, \cdot \rangle_1}$, there exists $C>0$ such that $C\| \psi_n \|_2\geq \|\psi_n\|_1$ for all $n$, 
and then $\|\psi_n\|_1\to 0$ and $\|A\psi_n-\phi\|_1\to 0$.
As $A$ is closable in $(\mathcal{H},\langle \cdot, \cdot \rangle_1)$, $\phi=0$, which shows that $A$ is also closable with respect to $\langle \cdot, \cdot \rangle_2$.

Let $\bar{A}^{\prime}$ be the closure of $A$ with respect to $\langle \cdot, \cdot \rangle_2$.
From the equivalence of the norms $\| \cdot \|_1$ and $\| \cdot \|_2$, it is easy to show that $\bar{A}^{\prime}$ is a closed map of $A$ with respect to $\langle \cdot, \cdot \rangle_1$. 
Owing to the minimality of closure, $\bar{A}\subset \bar{A}^{\prime}$.
Similarly, we can obtain $\bar{A}^{\prime}\subset \bar{A}$.
\end{proof}
\begin{theorem}
\label{theorem3.2}
    The linear operator $A:\mathcal{D}(A)\to \mathcal{H}_1 \bar{\otimes}\mathcal{H}_2$ given by (\ref{eqn:composite_operator}) is self-adjoint with respect to $(\mathcal{H}_1 \bar{\otimes}\mathcal{H}_2)_{\eta_1\otimes\eta_2}$.    
\end{theorem}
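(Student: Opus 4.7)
The plan is to transport the problem from $\mathcal{H}_1 \bar{\otimes}\mathcal{H}_2$ to the equivalent Hilbert space $(\mathcal{H}_1)_{\eta_1}\bar{\otimes}(\mathcal{H}_2)_{\eta_2}$ supplied by Theorem~\ref{theorem2.2}, where each $A_i$ is genuinely self-adjoint, and then appeal to the standard theory of self-adjoint operators on tensor products.

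\textbf{Step 1: Self-adjointness in the $\eta_i$-Hilbert spaces.} As recalled after (\ref{eqn:3-00}), each $\eta_i$-quasi-Hermitian operator $A_i$ is a self-adjoint operator on the Hilbert space $(\mathcal{H}_i)_{\eta_i}=(\mathcal{H}_i,\langle\cdot,\cdot\rangle_{\mathcal{H}_i\eta_i})$. Hence $A_1$ and $I_2$ are (essentially) self-adjoint in $(\mathcal{H}_1)_{\eta_1}$ and $(\mathcal{H}_2)_{\eta_2}$ respectively, and similarly for $I_1$ and $A_2$.

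\textbf{Step 2: Essential self-adjointness of the sum on the tensor product.} On the algebraic tensor product $\mathcal{D}(A_1)\otimes\mathcal{D}(A_2)$ viewed inside $(\mathcal{H}_1)_{\eta_1}\bar{\otimes}(\mathcal{H}_2)_{\eta_2}$, the operators $A_1\otimes I_2$ and $I_1\otimes A_2$ commute and are each essentially self-adjoint; the classical tensor-sum theorem (see e.g.\ Reed--Simon, Vol.~I, Thm.~VIII.33) then guarantees that the sum $A_1\otimes I_2+I_1\otimes A_2$ is essentially self-adjoint in $(\mathcal{H}_1)_{\eta_1}\bar{\otimes}(\mathcal{H}_2)_{\eta_2}$, and that its closure taken in this Hilbert space is self-adjoint.

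\textbf{Step 3: Identifying the closure.} By Theorem~\ref{theorem2.2}, the Hilbert space $(\mathcal{H}_1)_{\eta_1}\bar{\otimes}(\mathcal{H}_2)_{\eta_2}$ coincides as a set with $(\mathcal{H}_1\bar{\otimes}\mathcal{H}_2)_{\eta_1\otimes\eta_2}$ and carries the same inner product. Since $\eta_1\otimes\eta_2$ is positive and invertible on $\mathcal{H}_1\bar{\otimes}\mathcal{H}_2$, the norm $\|\cdot\|_{\eta_1\otimes\eta_2}$ is equivalent to $\|\cdot\|_{\mathcal{H}_1\bar{\otimes}\mathcal{H}_2}$. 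Lemma~\ref{lemma3.1} therefore applies to $A_1\otimes I_2+I_1\otimes A_2$ and tells us that its closure in $(\mathcal{H}_1\bar{\otimes}\mathcal{H}_2)_{\eta_1\otimes\eta_2}$ agrees exactly with $A=\overline{A_1\otimes I_2+I_1\otimes A_2}$ as defined in (\ref{eqn:composite_operator}). Combining this with Step~2 yields that $A$ is self-adjoint with respect to $\langle\cdot,\cdot\rangle_{\eta_1\otimes\eta_2}$.

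\textbf{Main obstacle.} The delicate point is Step~2: invoking the tensor-sum essential self-adjointness requires that $A_1$ and $A_2$ are already self-adjoint (not merely symmetric) on their natural domains in the $\eta_i$-Hilbert spaces, and it implicitly uses that $\mathcal{D}(A_1)\otimes\mathcal{D}(A_2)$ is a core. In our setting this is supplied by the hypothesis that each $A_i$ is $\eta_i$-quasi-Hermitian, i.e.\ self-adjoint after the metric change, so the cited theorem applies directly. The remaining work---matching the two closures via the norm equivalence---is precisely what Lemma~\ref{lemma3.1} was designed to handle, so no further technicalities arise.
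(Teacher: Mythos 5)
Your proposal is correct and follows essentially the same route as the paper's own proof: both use the self-adjointness of each $A_i$ in $(\mathcal{H}_i)_{\eta_i}$, invoke the Reed--Simon essential self-adjointness theorem for the tensor sum in $(\mathcal{H}_1)_{\eta_1}\bar{\otimes}(\mathcal{H}_2)_{\eta_2}$, and combine Theorem~\ref{theorem2.2} with the norm-equivalence argument of Lemma~\ref{lemma3.1} to identify the closure in the $\eta_1\otimes\eta_2$-metric with $A$ as defined in (\ref{eqn:composite_operator}). The only cosmetic difference is the order of the steps (the paper settles the closure identification first, you do it last), which does not affect the argument.
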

\begin{proof}
    
From Lemma \ref{lemma3.1}, 
the closure of the operator $A_1\otimes I_2 +I_1\otimes A_2$ in $(\mathcal{H}_1 \bar{\otimes}\mathcal{H}_2)_{\eta_1\otimes\eta_2}$ coincides with of in $\mathcal{H}_1 \bar{\otimes}\mathcal{H}_2$.
Thus, $A$ is the closure in $(\mathcal{H}_1 \bar{\otimes}\mathcal{H}_2)_{\eta_1\otimes\eta_2}$; 
the operator $A$ given by (\ref{eqn:composite_operator}) is well-defined in $(\mathcal{H}_1 \bar{\otimes}\mathcal{H}_2)_{\eta_1\otimes\eta_2}$.
We now recall that 
for Hilbert spaces $\mathcal{H}_1$ and $\mathcal{H}_2$ and self-adjoint operators $A_1:\mathcal{D}(A_1)\to \mathcal{H}_1$ in $\mathcal{H}_1$ and $A_2:\mathcal{D}(A_2)\to \mathcal{H}_2$  in $\mathcal{H}_2$, the linear operator  $T=A_1\otimes I_2 +I_1\otimes A_2:\mathcal{D}(T)\to \mathcal{H}_1\bar{\otimes}\mathcal{H}_2$ becomes an essential self-adjoint operator and hence its closure $\bar{T}$ is self-adjoint\cite{Simon1980}.
As each $\eta_i$-quasi Hermitian operator $A_i$ is self-adjoint in $(\mathcal{H}_i)_{\eta_i}$, $T=A_1\otimes I_2 +I_1\otimes A_2:\mathcal{D}(T)\to \mathcal{H}_1\bar{\otimes}\mathcal{H}_2$ is an essential self-adjoint operator in $(\mathcal{H}_1)_{\eta_1}\bar {\otimes}(\mathcal{H}_2)_{\eta_2}$; the closure $\bar{T}$ is self-adjoint.
By Lemma \ref{lemma2.1}, 
$(\mathcal{H}_1)_{\eta_1}\bar {\otimes}(\mathcal{H}_2)_{\eta_2}=(\mathcal{H}_1\bar{\otimes}\mathcal{H}_2)_{\eta_1\bar{\otimes}\eta_2}$, which shows that $A=\bar{T}$ is a self-adjoint operator in $(\mathcal{H}_1\bar{\otimes}\mathcal{H}_2)_{\eta_1\bar{\otimes}\eta_2}$.
\end{proof}
\begin{corollary}
    \label{corollary}
    For an $\eta_i$-quasi Hermitian operator, $A_i$, satisfying (\ref{eqn:3-00}) ($i=1,2$), the linear operator $A$ of the form presented in (\ref{eqn:composite_operator}) is an $\eta_1\otimes\eta_2$-quasi Hermitian operator satisfying the symmetric structure,
    \begin{eqnarray}
    A^{\dagger}= {(\eta_1\otimes\eta_2)}
    A({\eta_1\otimes\eta_2})^{-1}.
    \label{eqn:3-01}
    \end{eqnarray}
\end{corollary}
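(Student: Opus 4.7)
The plan is to reduce the corollary to Theorem \ref{theorem3.2} by invoking the standard equivalence between self-adjointness in an $\eta$-weighted Hilbert space and the intertwining relation defining quasi-Hermiticity. The content of the corollary is essentially that $A$ inherits, at the level of the composite system, the same structural property (\ref{eqn:3-00}) enjoyed by each factor $A_i$, but with $\eta_1\otimes\eta_2$ playing the role of the intertwining operator.

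First I would observe that, because each $\eta_i$ is positive and invertible on $\mathcal{H}_i$, the remark following Lemma \ref{lemma2.1} gives that $\eta_1\otimes\eta_2$ is positive and invertible on $\mathcal{H}_1\bar{\otimes}\mathcal{H}_2$, with $(\eta_1\otimes\eta_2)^{-1}=\eta_1^{-1}\otimes\eta_2^{-1}$. Hence the right-hand side of (\ref{eqn:3-01}) is well-defined. Next, by Theorem \ref{theorem3.2}, the operator $A$ is self-adjoint in $(\mathcal{H}_1\bar{\otimes}\mathcal{H}_2)_{\eta_1\otimes\eta_2}$.

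The core step is the following general fact, which I would state and prove as a short computation: if $\eta$ is a positive invertible operator on a Hilbert space $(\mathcal{H},\langle\cdot,\cdot\rangle_\mathcal{H})$ and $B$ is a densely defined linear operator on $\mathcal{H}$, then $B$ is self-adjoint in $(\mathcal{H})_\eta$ if and only if $B^\dagger=\eta B\eta^{-1}$ (with the appropriate domain), where $B^\dagger$ denotes the adjoint with respect to $\langle\cdot,\cdot\rangle_\mathcal{H}$. The proof is a one-line manipulation: for $\phi,\psi$ in suitable domains,
\begin{equation}
\langle B\phi,\psi\rangle_\eta=\langle B\phi,\eta\psi\rangle_\mathcal{H}=\langle\phi,B^\dagger\eta\psi\rangle_\mathcal{H},\qquad \langle\phi,B\psi\rangle_\eta=\langle\phi,\eta B\psi\rangle_\mathcal{H},
\end{equation}
so self-adjointness with respect to $\langle\cdot,\cdot\rangle_\eta$ forces $B^\dagger\eta=\eta B$ on the common domain, and invertibility of $\eta$ gives $B^\dagger=\eta B\eta^{-1}$. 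Applying this to $B=A$ and $\eta=\eta_1\otimes\eta_2$ yields (\ref{eqn:3-01}).

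The main obstacle I anticipate is the bookkeeping of domains: $A$ is defined as the closure of $A_1\otimes I_2+I_1\otimes A_2$, and one must check that the domain of $A^\dagger$ in $\mathcal{H}_1\bar{\otimes}\mathcal{H}_2$ agrees with $(\eta_1\otimes\eta_2)\mathcal{D}(A)$. This can be handled by noting that $\eta_1\otimes\eta_2$ and its inverse are bounded (hence everywhere defined) on $\mathcal{H}_1\bar{\otimes}\mathcal{H}_2$, so multiplication by them preserves graph closure; combined with Theorem \ref{theorem2.2}, which identifies $(\mathcal{H}_1)_{\eta_1}\bar{\otimes}(\mathcal{H}_2)_{\eta_2}$ with $(\mathcal{H}_1\bar{\otimes}\mathcal{H}_2)_{\eta_1\otimes\eta_2}$, the domains of self-adjointness in both pictures coincide, and the intertwining identity (\ref{eqn:3-01}) follows as an equality of operators rather than merely on a dense subspace.
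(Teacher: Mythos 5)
Your proposal is correct and follows essentially the same route as the paper: the paper's entire proof is to declare the corollary ``apparent from Theorem \ref{theorem3.2}'', implicitly relying on the standard equivalence (cited there via Antoine--Trapani) between self-adjointness in the $\eta$-weighted Hilbert space and the intertwining relation $B^\dagger=\eta B\eta^{-1}$, which is exactly the general fact you state and verify. Your version merely makes explicit the equivalence lemma and the domain identification $\mathcal{D}(A^\dagger)=(\eta_1\otimes\eta_2)\mathcal{D}(A)$ that the paper leaves unstated.
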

\begin{proof}
    It is apparent from Theorem \ref{theorem3.2}.
\end{proof}
\noindent
From these results, the linear operator of the form presented in (\ref{eqn:composite_operator}) represents an $\eta_1\otimes\eta_2$-quasi Hermitian operator with respect to the RHS in (\ref{eqn:2-00}).

\subsection{Spectral expansions}
\label{sec:3.2}

Let us now focus on the spectral expansions of an $\eta_1\otimes\eta_2$-quasi Hermitian as given in (\ref{eqn:composite_operator}).
From Theorem \ref{theorem2.2}, it is found that the $\eta_1\otimes\eta_2$-RHS is the tensor product of $\eta$-RHSs.
Thus, the spectral expansion of $\eta_1\otimes\eta_2$-quasi Hermitian can be obtained using the general results for spectral expansions in the tensor product of RHS\cite{Ohmori2024}.
On setting the $\eta_i$-RHS, $\Phi_i\subset (\mathcal{H}_i)_{\eta_i} \subset\Phi_i^{\prime},\Phi_i^{\times}$ $(i=1,2)$,
we assume that each $\eta_i$-quasi Hermitian $A_i$ is continuous on $(\Phi_i,\tau_{\Phi_i})$ and $A_i \Phi_i\subset\Phi_i$.
Then, for $\eta_1\otimes\eta_2$-RHS (\ref{eqn:eta_RHS}),
$A$ has the spectrum $Sp(A)=Cl(Sp(A_1)+Sp(A_2))$ lying on the real line
($ClX$ is the closure of a set $X$ in the real line)
because of its self-adjointness, 
and $A$ satisfies the continuity in $\Phi_1\hat {\otimes}\Phi_2$ and $A(\Phi_1\hat {\otimes}\Phi_2)\subset \Phi_1\hat {\otimes}\Phi_2$.
Therefore, with the aid of the nuclear spectral theorem, under $\eta_1\otimes\eta_2$-RHS, we have the following relations for any $\varphi, \psi \in \Phi_1 \widehat{\otimes} \Phi_2$:
\begin{eqnarray}
    \langle \varphi,\, \psi\rangle_{\eta_1{\otimes} \eta_2}
    & = & 
     \displaystyle
     \int_{\lambda\in Sp({A})}
     \braket{\hat{\varphi}}{\hat{\psi}}_\lambda
     d\mu_\lambda,
    \label{eqn:o3-2-1a}\\
    \langle \varphi,\, A\psi\rangle_{\eta_1{\otimes} \eta_2}
    & = & 
     \displaystyle
     \int_{\lambda\in Sp({A})}
     \lambda\braket{\hat{\varphi}}{\hat{\psi}}_\lambda
     d\mu_\lambda,
    \label{eqn:o3-2-1b}
\end{eqnarray}
with
\begin{eqnarray}
    \displaystyle\braket{\hat{\varphi}}{\hat{\psi}}_\lambda
    =
    \displaystyle\int_{\lambda=\lambda_1+\lambda_2}
     \sum_{k=1}^{dim \hat{\mathcal{H}_1}(\lambda_1)} 
     \sum_{l=1}^{dim \hat{\mathcal{H}_2}(\lambda_2)}
     (\ket{\lambda_1}_{\eta_1,k}\otimes
     \ket{\lambda_2}_{\eta_2,l})(\varphi)
     (\bra{\lambda_1}_{\eta_1,k}\otimes
     \bra{\lambda_2}_{\eta_2,l})(\psi)
     d\sigma^\lambda_{\lambda_1,\lambda_2},
    \label{eqn:o3-2-1c}
\end{eqnarray}
where $\mu_\lambda$ is the Borel measure given in the Neumanns' complete spectral theorem\cite{Maurin1968}, and 
$\sigma^\lambda_{\lambda_1,\lambda_2}$ is also a Borel measure on $\mathbb{R}^2$ whose support is contained in the set $\{(\lambda_1,\lambda_2)\in \mathbb{R}^2 ; \lambda=\lambda_1+\lambda_2, \lambda_i\in Sp(A_i) (i=1,2)\}$.
The notations of bra, $\bra{\lambda_1}_{\eta_1,k}(k=1,2,\cdots, dim \hat{\mathcal{H}_1}(\lambda_1))$ and $\bra{\lambda_2}_{\eta_2,l}~(l=1,2,\cdots, dim \hat{\mathcal{H}_2}(\lambda_2))$, represent the generalized eigenvectors of $A_1$ and $A_2$ in $\Phi_1^\prime$ and $\Phi_2^{\prime}$, corresponding to $\lambda_1$ and $\lambda_2$, which are obtained from the nuclear spectral theorem, respectively. 
The ket $\ket{\lambda}$ represents the complex conjugate of the bra $\bra{\lambda}$, 
i.e., $\ket{\lambda}(\varphi)=\bra{\lambda}^*(\varphi)=\bra{\lambda}(\varphi)^*$. 
When $dim \hat{\mathcal{H}_1}(\lambda_1)=dim \hat{\mathcal{H}_2}(\lambda_2)=1$,
the relations in ($\ref{eqn:o3-2-1a}$) and ($\ref{eqn:o3-2-1b}$) become
\begin{eqnarray}
    \langle \varphi,\, \psi\rangle_{\eta_1{\otimes} \eta_2}
    & = & 
     \displaystyle
     \int\limits_{\lambda\in Sp({A})}
    \Big{\{}
    \displaystyle\int\limits_{\lambda=\lambda_1+\lambda_2}
     (\ket{\lambda_1}_{\eta_1}\otimes
     \ket{\lambda_2}_{\eta_2})(\varphi)
     (\bra{\lambda_1}_{\eta_1}\otimes
     \bra{\lambda_2}_{\eta_2})(\psi)
     d\sigma^\lambda_{\lambda_1,\lambda_2}
     \Big{\}}
     d\mu_\lambda,
    \label{eqn:o3-2-2a}
    \\
    \langle \varphi,\, A\psi\rangle_{\eta_1{\otimes} \eta_2}
    & = & 
     \displaystyle
     \int\limits_{\lambda\in Sp({A})}
    \lambda
    \Big{\{}
    \displaystyle\int\limits_{\lambda=\lambda_1+\lambda_2}
     (\ket{\lambda_1}_{\eta_1}\otimes
     \ket{\lambda_2}_{\eta_2})(\varphi)
     (\bra{\lambda_1}_{\eta_1}\otimes
     \bra{\lambda_2}_{\eta_2})(\psi)
     d\sigma^\lambda_{\lambda_1,\lambda_2}
     \Big{\}}
     d\mu_\lambda,
    \label{eqn:o3-2-2b}
\end{eqnarray}
for any $\varphi, \psi \in \Phi_1 \widehat{\otimes} \Phi_2$.
The bra vector $\bra{\lambda_1}_{\eta_1}\otimes
     \bra{\lambda_2}_{\eta_2}$ and ket vector $\ket{\lambda_1}_{\eta_1}\otimes
     \ket{\lambda_2}_{\eta_2}$ 
belong to $(\Phi_1 \widehat{\otimes} \Phi_2)^\prime$ and 
     $(\Phi_1 \widehat{\otimes} \Phi_2)^{\times}$, respectively,
and as they are generalized eigenvectors of $A$, the eigenequations
\begin{eqnarray}
    \bra{\lambda_1}_{\eta_1}\otimes
     \bra{\lambda_2}_{\eta_2}(A\varphi)
     & = &
     (\lambda_1+\lambda_2)\bra{\lambda_1}_{\eta_1}\otimes
     \bra{\lambda_2}_{\eta_2}(\varphi),
     \label{eqn:eigenequation_tensors1}\\
     \ket{\lambda_1}_{\eta_1}\otimes
     \ket{\lambda_2}_{\eta_2}(A\varphi)
     & = &
     (\lambda_1+\lambda_2)\ket{\lambda_1}_{\eta_1}\otimes
     \ket{\lambda_2}_{\eta_2}(\varphi),
    \label{eqn:eigenequation_tensors2}
\end{eqnarray}
are satisfied.
We denote $\bra{\lambda_1}_{\eta_1}\otimes
     \bra{\lambda_2}_{\eta_2}(\varphi)$ and $\bra{\lambda_1}_{\eta_1}\otimes
     \bra{\lambda_2}_{\eta_2}$ by 
     $\bra{\lambda_1}_{\eta_1}\otimes\bra{\lambda_2}_{\eta_2}\ket{\varphi}_{\mathcal{H}_1\bar{\otimes}\mathcal{H}_2}
     $ = 
     $\braket{\lambda_1\otimes\lambda_2}{\varphi}_{\eta_1\otimes\eta_2}
     $ and $\bra{\lambda_1\otimes\lambda_2}_{\eta_1\otimes\eta_2}$, respectively.
Similarly, $\ket{\lambda_1}_{\eta_1}\otimes
     \ket{\lambda_2}_{\eta_2}(\varphi)$ and $\ket{\lambda_1}_{\eta_1}\otimes
     \ket{\lambda_2}_{\eta_2}$ are denoted by
     $\bra{\varphi}_{\mathcal{H}_1\bar{\otimes}\mathcal{H}_2}\ket{\lambda_1}_{\eta_1}\otimes\ket{\lambda_2}_{\eta_2}
     $ = 
     $\braket{\varphi}{\lambda_1\otimes\lambda_2}_{\eta_1\otimes\eta_2}
     $ and $\ket{\lambda_1\otimes\lambda_2}_{\eta_1\otimes\eta_2}$, respectively.
In addition, we adopt the following abbreviation\cite{Ohmori2024}:
\begin{eqnarray}
    \int_{\lambda\in Sp(A)}\int_{\lambda=\lambda_1+\lambda_2}\to \int_{Sp(A)}~~ 
    \text{and}~~
    d\sigma^\lambda_{\lambda_1,\lambda_2}d\mu_\lambda \to d\nu.
    \label{abbreviation}
\end{eqnarray}
Considering these notation, the spectral expansions of the bra and ket vectors for the $\eta_1\otimes\eta_2$-quasi Hermitian $A$ are derived from (\ref{eqn:o3-2-2a}) and (\ref{eqn:o3-2-2b}), as follows: for $\varphi \in \Phi_1 \widehat{\otimes} \Phi_2$,
    \begin{eqnarray}
    \bra{\varphi}_{\eta_1{\otimes} \eta_2}
    & = &
    \displaystyle\int_{Sp(A)}
     \braket{\varphi}{\lambda_1\otimes
     \lambda_2}_{\eta_1{\otimes} \eta_2}
     \bra{\lambda_1\otimes
     \lambda_2}_{\eta_1{\otimes} \eta_2} 
    d\nu,\\
    \label{spectralexpansion_bra_aa}
    \bra{A\varphi}_{\eta_1{\otimes} \eta_2}
    & = &
    \displaystyle\int_{Sp(A)}\lambda
     \braket{\varphi}{\lambda_1\otimes
     \lambda_2}_{\eta_1{\otimes} \eta_2}
     \bra{\lambda_1\otimes
     \lambda_2}_{\eta_1{\otimes} \eta_2} 
    d\nu,
    \label{spectralexpansion_bra_ab}
\end{eqnarray}
and 
\begin{eqnarray}
     \ket{\varphi}_{\eta_1{\otimes} \eta_2}
    & = &
    \displaystyle\int_{Sp(A)}
     \braket{\lambda_1\otimes
     \lambda_2}{\varphi}_{\eta_1{\otimes} \eta_2}    
     \ket{\lambda_1\otimes
     \lambda_2}_{\eta_1{\otimes} \eta_2}
    d\nu,\\
    \label{spectralexpansion_ket_aa}
         \ket{A\varphi}_{\eta_1{\otimes} \eta_2}
    & = & 
    \displaystyle\int_{Sp(A)}\lambda
     \braket{\lambda_1\otimes
     \lambda_2}{\varphi}_{\eta_1{\otimes} \eta_2}    \ket{\lambda_1\otimes
     \lambda_2}_{\eta_1{\otimes} \eta_2}
    d\nu.
    \label{spectralexpansion_ket_ab}
    \end{eqnarray}
These spectral expansions can be converted into that in the $\mathcal{H}_1\bar{\otimes}\mathcal{H}_2$-system;
using (\ref{eqn:2-ext40a}) and (\ref{eqn:2-ext40b}), we obtain
\begin{eqnarray}
     \ket{\varphi}_{\mathcal{H}_1\bar{\otimes}\mathcal{H}_2}
     & = & \ket{(\eta_1\otimes\eta_2)^{-1}\varphi}_{\eta_1{\otimes} \eta_2}
     \nonumber\\ 
      & = &
    \displaystyle\int_{Sp(A)}
     \bra{\lambda_1}_{\eta_1}\otimes
     \bra{\lambda_2}_{\eta_2} ((\eta_1\otimes\eta_2)^{-1}\varphi)    
     \ket{\lambda_1}_{\eta_2}\otimes
     \ket{\lambda_2}_{\eta_2}
    d\nu
    \nonumber\\
        & = &
    \displaystyle\int_{Sp(A)}
     \widehat{\eta_1\otimes\eta_2}^{-1}(\bra{\lambda_1}_{\eta_1}\otimes
     \bra{\lambda_2}_{\eta_2}(\varphi))    
     \ket{\lambda_1}_{\eta_1}\otimes
     \ket{\lambda_2}_{\eta_2}
    d\nu
    \nonumber\\
        & = &
    \displaystyle\int_{Sp(A)}
     \widehat{\eta_1\otimes\eta_2}^{-1}(\bra{\lambda_1}_{\eta_1}\otimes
     \bra{\lambda_2}_{\eta_2}\ket{\varphi}_{\mathcal{H}_1\bar{\otimes}\mathcal{H}_2})    
     \ket{\lambda_1}_{\eta_1}\otimes
     \ket{\lambda_2}_{\eta_2}
    d\nu
    \nonumber\\
    & = &
    \displaystyle\int_{Sp(A)}
     \bra{\lambda_1}_{\eta_1}\otimes
     \bra{\lambda_2}_{\eta_2}\widehat{\eta_1\otimes\eta_2}^{-1} \ket{\varphi}_{\mathcal{H}_1\bar{\otimes} \mathcal{H}_2}    
     \ket{\lambda_1}_{\eta_1}\otimes
     \ket{\lambda_2}_{\eta_2}
    d\nu.
    \label{spectralexpansion_ket_H_1}
    \end{eqnarray}
%
%
Similarly, 
\begin{eqnarray}
     \ket{A\varphi}_{\mathcal{H}_1\bar{\otimes}\mathcal{H}_2}
     & = & 
    \displaystyle\int_{Sp(A)}
     \lambda
     \bra{\lambda_1}_{\eta_1}\otimes
     \bra{\lambda_2}_{\eta_2}\ket{\varphi}_{\mathcal{H}_1\bar{\otimes} \mathcal{H}_2}    
     \widehat{\eta_1\otimes\eta_2}^{-1}
     \ket{\lambda_1}_{\eta_1}\otimes
     \ket{\lambda_2}_{\eta_2}
    d\nu,
    \label{spectralexpansion_ket_H_2}\\
     \bra{\varphi}_{\mathcal{H}_1\bar{\otimes}\mathcal{H}_2}
     & = & 
    \displaystyle\int_{Sp(A)}
     \bra{\varphi}_{\mathcal{H}_1\bar{\otimes} \mathcal{H}_2}\widehat{\eta_1\otimes\eta_2}^{-1}
         \ket{\lambda_1}_{\eta_1}\otimes
     \ket{\lambda_2}_{\eta_2}
     \bra{\lambda_1}_{\eta_1}\otimes
     \bra{\lambda_2}_{\eta_2}     
    d\nu,
    \label{spectralexpansion_bra_H_1}\\
     \bra{A\varphi}_{\mathcal{H}_1\bar{\otimes}\mathcal{H}_2}
     & = & 
    \displaystyle\int_{Sp(A)}
     \lambda
     \bra{\varphi}_{\mathcal{H}_1\bar{\otimes} \mathcal{H}_2}
         \ket{\lambda_1}_{\eta_1}\otimes
     \ket{\lambda_2}_{\eta_2}
     \bra{\lambda_1}_{\eta_1}\otimes
     \bra{\lambda_2}_{\eta_2}\widehat{\eta_1\otimes\eta_2}^{-1}
    d\nu.
    \label{spectralexpansion_bra_H_2}
    \end{eqnarray}
The spectral expansions concerning the adjoint operator $A^\dagger$ in $\mathcal{H}_1\bar{\otimes}\mathcal{H}_2$ can be also obtained by
\begin{eqnarray}
     \ket{A^\dagger \varphi}_{\mathcal{H}_1\bar{\otimes}\mathcal{H}_2}
     & = & 
    \displaystyle\int_{Sp(A)}
     \lambda
     \bra{\lambda_1}_{\eta_1}\otimes
     \bra{\lambda_2}_{\eta_2}
     \widehat{\eta_1\otimes\eta_2}^{-1}
     \ket{\varphi}_{\mathcal{H}_1\bar{\otimes} \mathcal{H}_2}    
     \ket{\lambda_1}_{\eta_1}\otimes
     \ket{\lambda_2}_{\eta_2}
    d\nu,
    \label{spectralexpansion_adjoint_ket_H_1}\\
     \bra{A^\dagger\varphi}_{\mathcal{H}_1\bar{\otimes}\mathcal{H}_2}
     & = & 
    \displaystyle\int_{Sp(A)}
     \lambda
     \bra{\varphi}_{\mathcal{H}_1\bar{\otimes} \mathcal{H}_2}
     \widehat{\eta_1\otimes\eta_2}^{-1}
         \ket{\lambda_1}_{\eta_1}\otimes
     \ket{\lambda_2}_{\eta_2}
     \bra{\lambda_1}_{\eta_1}\otimes
     \bra{\lambda_2}_{\eta_2}
    d\nu.
    \label{spectralexpansion_adjoint_bra_H_2}
    \end{eqnarray}

It can be easily show using  (\ref{spectralexpansion_ket_H_1}) and (\ref{spectralexpansion_bra_H_1}) that $\big{\{}\ket{\lambda_1}_{\eta_1}\otimes\ket{\lambda_2}_{\eta_2}, \ket{\lambda_1}_{\mathcal{H}_1}\otimes\ket{\lambda_2}_{\mathcal{H}_2}\big{\}}_{\lambda=\lambda_1+\lambda_2\in Sp(A)}$ constitutes
the complete bi-orthogonal base for the quasi-Hermitian composite system, where 
\begin{eqnarray}
    \bra{\lambda_1}_{\mathcal{H}_1}\otimes\bra{\lambda_2}_{\mathcal{H}_2}=\bra{\lambda_1}_{\eta_1}\otimes\bra{\lambda_2}_{\eta_2}\widehat{\eta_1\otimes\eta_2}^{-1}
\text{~and~} 
    \ket{\lambda_1}_{\mathcal{H}_1}\otimes\ket{\lambda_2}_{\mathcal{H}_2}=\widehat{\eta_1\otimes\eta_2}^{-1}\ket{\lambda_1}_{\eta_1}\otimes\ket{\lambda_2}_{\eta_2}.
    \label{eqn:generalized_eigenvector_for_H}
\end{eqnarray}
The completion relation is given by 
\begin{eqnarray}
    I =  \displaystyle\int_{Sp(A)}
     \ket{\lambda_1}_{\eta_1}\otimes
     \ket{\lambda_2}_{\eta_2}
     \bra{\lambda_1}_{\mathcal{H}_1}\otimes
     \bra{\lambda_2}_{\mathcal{H}_2}     
    d\nu
    =
    \displaystyle\int_{Sp(A)}
     \ket{\lambda_1}_{\mathcal{H}_1}\otimes
     \ket{\lambda_2}_{\mathcal{H}_2}
     \bra{\lambda_1}_{\eta_1}\otimes
     \bra{\lambda_2}_{\eta_2}     
    d\nu,
    \label{eqn:completion}
\end{eqnarray}
and
\begin{eqnarray}
\bra{\lambda^{\prime}_1}_{\eta_1}\otimes     \bra{\lambda^{\prime}_2}_{\eta_2}
    \ket{\lambda_1}_{\mathcal{H}_1}\otimes     \ket{\lambda_2}_{\mathcal{H}_2}
    =
    \bra{\lambda^{\prime}_1}_{\mathcal{H}_1}\otimes     \bra{\lambda^{\prime}_2}_{\mathcal{H}_2}
    \ket{\lambda_1}_{\eta_1}\otimes     \ket{\lambda_2}_{\eta_2}
    =
    \Check{\delta}(\lambda_1^\prime-\lambda_1)
    \Check{\delta}(\lambda_2^\prime-\lambda_2)
    \label{eqn:O3-3-6}
\end{eqnarray}
shows the bi-orthogonality where $\Check{\delta}$ is the delta-function specifying
\begin{eqnarray}
    f(\lambda_1^\prime,\lambda_2^\prime)
     = 
    \int_{Sp(A)}
    f(\lambda_1,\lambda_2)\Check{\delta}(\lambda_1^\prime-\lambda_1)
    \Check{\delta}(\lambda_2^\prime-\lambda_2)
    d\nu
    \label{deltafunction}
\end{eqnarray}
for any function $f(\lambda_1,\lambda_2)$.
%

\section{Quasi Hermitian in dual spaces}
\label{sec:4}
As shown in Sec.\ref{sec:3.2}, 
the bra and ket vectors along with their spectral expansions for quasi-Hermitian (non-Hermitian) systems can be formulated within a set of dual and anti-dual spaces.
This result suggests that these dual spaces are available as fundamental spaces when discussing non-Hermitian systems using the bra-ket notation.
Furthermore, the utilization of these dual spaces is a solution to the issue of defining the adjoint of a quasi-Hermitian operator in non-Hermitian composite systems.
In such a system, the physical operator of each isolated system is generally non-Hermitian, $A_i \neq A_i^\dagger$.
Then, $(A^\dagger_1\otimes I_2 + I_1\otimes A^\dagger_2) \neq (A_1\otimes I_2 + I_1\otimes A_2)^\dagger$, and $(A^\dagger_1\otimes I_2 + I_1\otimes A^\dagger_2) \subset (A_1\otimes I_2 + I_1\otimes A_2)^\dagger$ is satisfied\cite{Simon1980}.
Therefore, when the composite operator $A$ is adapted as shown in (\ref{eqn:composite_operator}), we obtain $A^{\dagger} \neq \overline{A^\dagger_1\otimes I_2 + I_1\otimes A^\dagger_2}$, which results in an inconsistency between the definitions of $A$ and $A^\dagger$.
This inconsistency is resolved by extending the operators to the dual spaces.

At first, we identify $(\Phi_1\hat{\otimes} \Phi_2)^{j}$
 with $(\Phi_1{\otimes} \Phi_2)^{j}$ 
under the algebraic isomorphic mapping,  
$(\Phi_1\hat{\otimes} \Phi_2)^{j} \to (\Phi_1{\otimes} \Phi_2)^{j},f\mapsto f|{\Phi_1{\otimes} \Phi_2}$,
for each $j=\prime,\times$\cite{Schaefer1966}.
Hence, we can consider the dual spaces as  
$(\Phi_1{\otimes} \Phi_2)^{\prime}\cup(\Phi_1{\otimes} \Phi_2)^{\times}$ instead of $(\Phi_1\hat{\otimes} \Phi_2)^{\prime}\cup(\Phi_1\hat{\otimes} \Phi_2)^{\times}$.
Note that the tensor product spaces of the dual spaces $\Phi_1^{\prime}\otimes\Phi_2^{\prime}$ and $\Phi_1^{\times}\otimes\Phi_2^{\times}$ are subsets of $(\Phi_1\otimes\Phi_2)^\prime
$ and $(\Phi_1\otimes\Phi_2)^\times$, respectively\cite{Schaefer1966}.
Given a pair of RHSs', $\Phi_i\subset\mathcal{H}_i\subset\Phi_i^{\prime}, \Phi_i^{\times}$ $(i=1,2)$, 
let $F:\mathcal{D}(F)\to \mathcal{H}_1$ and $G:\mathcal{D}(G)\to \mathcal{H}_2$ be  linear operators that are continuous on $(\Phi_1,\tau_{\Phi_1})$ and $(\Phi_2,\tau_{\Phi_2})$ and that satisfy $F\Phi_1\subset\Phi_1$ and $G\Phi_2\subset\Phi_2$, respectively.
As shown in Sec. \ref{sec:2.2}
, these operators have the extensions $\hat{F}:\Phi_1^\prime\cup\Phi_1^\times\to\Phi_1^\prime\cup\Phi_1^\times$ and 
$\hat{G}:\Phi_2^\prime\cup\Phi_2^\times\to \Phi_2^\prime\cup\Phi_2^\times$, given by (\ref{eqn:extension_definition}).
Moreover, their tensor product operator $\hat{F}\otimes \hat{G}$ from $\check{\Phi}$ into $\check{\Phi}$,
where $\check{\Phi}\equiv(\Phi_1^\prime\otimes\Phi_2^\prime)\cup(\Phi_1^\times\otimes\Phi_2^\times)$,
can be extended to the dual spaces $(\Phi_1\otimes\Phi_2)^\prime\cup (\Phi_1\otimes\Phi_2)^\times$.
The extension 
\begin{eqnarray}
\overline{\hat{F}\otimes\hat{G}}:(\Phi_1\otimes\Phi_2)^\prime\cup (\Phi_1\otimes\Phi_2)^\times\to (\Phi_1\otimes\Phi_2)^\prime\cup (\Phi_1\otimes\Phi_2)^\times
\label{eqn:def_extension_dual_0}
\end{eqnarray}
is defined as
\begin{eqnarray}
    \overline{\hat{F}\otimes\hat{G}}(f)(\phi)
    =
    f((F\otimes G) (\phi)),
\label{eqn:def_extension_dual}
\end{eqnarray}
for 
$ f\in (\Phi_1\otimes\Phi_2)^j ~(j=\prime,\times),\phi\in\Phi_1\otimes\Phi_2$.
Clearly, $\overline{\hat{F}\otimes\hat{G}}$ is a linear operator and is the unique extension of $\hat{F}\otimes \hat{G}$ to $(\Phi_1\otimes\Phi_2)^\prime\cup (\Phi_1\otimes\Phi_2)^\times$.
Furthermore, we define the operator  
$\overline{\hat{F}_1\otimes\hat{G}_1\pm \hat{F}_2\otimes\hat{G}_2}$ on $(\Phi_1\otimes\Phi_2)^\prime\cup (\Phi_1\otimes\Phi_2)^\times$ by
$
\overline{\hat{F}_1\otimes\hat{G}_1\pm \hat{F}_2\otimes\hat{G}_2}(f)(\phi)
=
f((F_1\otimes G_1\pm F_2\otimes G_2) (\phi))$,
for 
$ f\in (\Phi_1\otimes\Phi_2)^j ~(j=\prime,\times),\phi\in\Phi_1\otimes\Phi_2$.
We can easily verify the relationship 
$\overline{\hat{F}_1\otimes\hat{G}_1\pm \hat{F}_2\otimes\hat{G}_2}=\overline{\hat{F}_1\otimes\hat{G}_1}\pm \overline{\hat{F}_2\otimes\hat{G}_2}$.
For the $\eta_i$-quasi Hermitian $A_i$, its extension 
$\hat{A}_i:\Phi_i^{\prime}\cup \Phi_i^{\times}\to \Phi_i^{\prime}\cup \Phi_i^{\times}$ can be defined using (\ref{eqn:extension_definition}) $(i=1,2)$.
Using the identity mapping $\hat{I}_i$ on $\Phi_i^{\prime}\cup \Phi_i^{\times}$, which is the extension of each identity $I_i$ on $\Phi_i$, 
the tensor product operators, $\hat{A}_1\otimes\hat{I}_2$ and $\hat{I}_1\otimes\hat{A}_2$, can be defined on $\Phi_1^{j}\otimes\Phi_2^{j}$ for each  $j=\prime,\times$.
Then, the following lemmas hold.
\begin{lemma}
\label{lemma4.1}
The extension $\hat{A}$ of $A$ given by  (\ref{eqn:composite_operator}) can be represented as
\begin{eqnarray}
 \hat{A}=\overline{\hat{A}_1\otimes\hat{I}_2+\hat{I}_1\otimes\hat{A}_2}
    \label{eqn:extension_relation1a}
\end{eqnarray}
on $(\Phi_1\otimes\Phi_2)^\prime\cup (\Phi_1\otimes\Phi_2)^\times$.
\end{lemma}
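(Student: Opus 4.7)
The plan is to verify the equality of the two operators by unfolding definitions and reducing to the algebraic tensor product $\Phi_1 \otimes \Phi_2$, on which the closure $A$ acts explicitly as $A_1 \otimes I_2 + I_1 \otimes A_2$. First I would fix $j \in \{\prime,\times\}$ and an arbitrary $f \in (\Phi_1\otimes\Phi_2)^j$; under the algebraic isomorphism described just before the statement, $f$ is to be identified with its unique continuous extension in $(\Phi_1 \hat{\otimes} \Phi_2)^j$, which I continue to denote by $f$. The goal then becomes showing that $\hat{A}(f)$ and $\overline{\hat{A}_1\otimes\hat{I}_2+\hat{I}_1\otimes\hat{A}_2}(f)$ agree as functionals after identification.

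Next I would compute the action of each side on a generic $\phi \in \Phi_1 \otimes \Phi_2$. By (\ref{eqn:extension_definition}), $(\hat{A}(f))(\phi)=f(A(\phi))$, while by (\ref{eqn:def_extension_dual}), $(\overline{\hat{A}_1\otimes\hat{I}_2+\hat{I}_1\otimes\hat{A}_2}(f))(\phi)=f((A_1\otimes I_2+I_1\otimes A_2)(\phi))$. For $\phi=\sum_{k=1}^m \phi_{1k}\otimes\phi_{2k}\in \Phi_1\otimes\Phi_2$, the hypothesis $A_i\Phi_i\subset\Phi_i$ ensures that $(A_1\otimes I_2+I_1\otimes A_2)\phi=\sum_k(A_1\phi_{1k})\otimes\phi_{2k}+\sum_k \phi_{1k}\otimes(A_2\phi_{2k})$ again lies in $\Phi_1\otimes\Phi_2$, and since $A$ is the closure of $A_1\otimes I_2+I_1\otimes A_2$, the two operators coincide on this subspace. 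Consequently $f(A\phi)=f((A_1\otimes I_2+I_1\otimes A_2)\phi)$ for every $\phi\in\Phi_1\otimes\Phi_2$, which is the desired equality on the algebraic tensor.

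Finally, I would promote this pointwise equality on $\Phi_1\otimes\Phi_2$ to an equality of functionals in $(\Phi_1\hat{\otimes}\Phi_2)^j$. Both $\hat{A}(f)$ and $\overline{\hat{A}_1\otimes\hat{I}_2+\hat{I}_1\otimes\hat{A}_2}(f)$ lie in this dual space (the former by construction from the extension procedure of Sec.~\ref{sec:2.2}, the latter because its restriction to $\Phi_1\otimes\Phi_2$ extends uniquely by density), and two (anti)linear continuous functionals on $\Phi_1\hat{\otimes}\Phi_2$ that coincide on the dense subspace $\Phi_1\otimes\Phi_2$ must coincide everywhere. The stated identity on $(\Phi_1\otimes\Phi_2)^\prime\cup(\Phi_1\otimes\Phi_2)^\times$ then follows.

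The main obstacle here is mostly bookkeeping around the algebraic isomorphism $(\Phi_1\hat{\otimes}\Phi_2)^j\simeq(\Phi_1\otimes\Phi_2)^j$: one must be careful that $\hat{A}$ is a priori defined on the former, while $\overline{\hat{A}_1\otimes\hat{I}_2+\hat{I}_1\otimes\hat{A}_2}$ is defined on the latter, so that the lemma is really an equality after identification. The essential analytic input is the definition of $A$ as the closure of $A_1\otimes I_2+I_1\otimes A_2$, which forces its restriction to $\Phi_1\otimes\Phi_2$ to match the untaken-closure operator; everything else is a density-plus-definitions calculation.
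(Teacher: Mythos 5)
Your proof is correct and follows essentially the same route as the paper's: both reduce the claim to the definitional identity $\hat{A}(f)(\phi)=f(A\phi)=f((A_1\otimes I_2+I_1\otimes A_2)\phi)=\overline{\hat{A}_1\otimes\hat{I}_2+\hat{I}_1\otimes\hat{A}_2}(f)(\phi)$ for $\phi\in\Phi_1\otimes\Phi_2$, using that the closure $A$ extends $A_1\otimes I_2+I_1\otimes A_2$ and that $A_i\Phi_i\subset\Phi_i$ keeps everything in the algebraic tensor product. The only cosmetic difference is that the paper first quotes the restriction identity $\hat{A}|\check{\Phi}=\hat{A}_1\otimes\hat{I}_2+\hat{I}_1\otimes\hat{A}_2$ from prior work, whereas you verify the needed equality directly, making your argument self-contained.
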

\begin{proof}
Let us consider the restriction $\hat{A}|\check{\Phi}$, which has the following relation to $\hat{A}_1$ and $\hat{A}_2$\cite{Ohmori2024} :
\begin{eqnarray}
 \hat{A}|{\check{\Phi}}=\hat{A}_1\otimes\hat{I}_2+\hat{I}_1\otimes\hat{A}_2.
    \label{eqn:extension_relation1}
\end{eqnarray}
Using this relation, it can be easily confirmed that for any $f\in(\Phi_1\otimes\Phi_2)^{j}$ and for any $\phi\in\Phi_1\otimes\Phi_2$, $\hat{A}(f)(\phi)=f(A\phi)=\overline{\hat{A}_1\otimes\hat{I}_2+\hat{I}_1\otimes\hat{A}_2}(f)(\phi)$.
\end{proof}
\noindent
Lemma \ref{lemma4.1} shows that the form in (\ref{eqn:composite_operator}), of the linear operator describing the observable of the composite system, is retained in the dual spaces, as shown in (\ref{eqn:extension_relation1a}).
\begin{lemma}
\label{lemma4.2}
Let $\hat{A}^{\dagger}_i:\Phi_i^{\prime}\cup \Phi_i^{\times}\to \Phi_i^{\prime}\cup \Phi_i^{\times}$ be an extension of the adjoint operator $A_i^{\dagger}$ of the $\eta_i$-quasi Hermitian $A_i$, defined by $\hat{A}^\dagger_i(f)(\phi)=f(A_i^\dagger\phi)$ for $f\in\Phi_i^{\prime}\cup\Phi_i^{\times}$ and $\phi\in\Phi_i$ $(i=1,2)$.
Let $\hat{A}^{\dagger}$ be an  extension of the adjoint $A^\dagger$ of $A$ given by (\ref{eqn:composite_operator}) to $(\Phi_1\otimes\Phi_2)^\prime\cup(\Phi_1\otimes\Phi_2)^\times$.
Then we obtain 
\begin{eqnarray}
 \hat{A}^\dagger=\overline{\hat{A}_1^\dagger\otimes\hat{I}_2+\hat{I}_1\otimes\hat{A}_2^\dagger}.
\label{eqn:extension_relation2a}
\end{eqnarray}

\end{lemma}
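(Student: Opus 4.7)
The plan is to proceed in direct parallel to Lemma \ref{lemma4.1}, with $A^\dagger$ replacing $A$ and $A_i^\dagger$ replacing $A_i$. First I would check that each adjoint $A_i^\dagger$ satisfies the hypotheses required for the extension construction in (\ref{eqn:extension_definition}): because (\ref{eqn:3-00}) gives $A_i^\dagger=\eta_i A_i\eta_i^{-1}$ and the three factors $\eta_i$, $\eta_i^{-1}$, $A_i$ are each continuous on $(\Phi_i,\tau_{\Phi_i})$ and leave $\Phi_i$ invariant, $A_i^\dagger$ inherits both properties. Hence $\hat{A}_i^\dagger:\Phi_i^{\prime}\cup\Phi_i^{\times}\to\Phi_i^{\prime}\cup\Phi_i^{\times}$ is well defined, the tensor product $\hat{A}_1^\dagger\otimes\hat{I}_2+\hat{I}_1\otimes\hat{A}_2^\dagger$ makes sense on $\check{\Phi}$, and its extension $\overline{\hat{A}_1^\dagger\otimes\hat{I}_2+\hat{I}_1\otimes\hat{A}_2^\dagger}$ to $(\Phi_1\otimes\Phi_2)^{\prime}\cup(\Phi_1\otimes\Phi_2)^{\times}$ is defined by (\ref{eqn:def_extension_dual}).

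The central step is the pointwise identity
\begin{equation*}
A^{\dagger}\phi=(A_1^{\dagger}\otimes I_2+I_1\otimes A_2^{\dagger})\phi\qquad\text{for every }\phi\in\Phi_1\otimes\Phi_2.
\end{equation*}
Writing $T=A_1\otimes I_2+I_1\otimes A_2$, one has $A=\bar{T}$ by (\ref{eqn:composite_operator}) and hence $A^{\dagger}=\bar{T}^{\dagger}=T^{\dagger}$. Combining this with the containment $A_1^{\dagger}\otimes I_2+I_1\otimes A_2^{\dagger}\subset T^{\dagger}$ recalled in the opening of Section \ref{sec:4}, together with $\Phi_1\otimes\Phi_2\subset\mathcal{D}(A_1^{\dagger}\otimes I_2+I_1\otimes A_2^{\dagger})$ (which follows from $A_i^{\dagger}\Phi_i\subset\Phi_i$), the two operators agree on $\Phi_1\otimes\Phi_2$. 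This is precisely the set on which $\hat{A}^{\dagger}$ is tested.

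With the identity in hand, the conclusion is a two-line verification in the spirit of Lemma \ref{lemma4.1}: for any $f\in(\Phi_1\otimes\Phi_2)^{j}$ ($j=\prime,\times$) and $\phi\in\Phi_1\otimes\Phi_2$,
\begin{equation*}
\hat{A}^{\dagger}(f)(\phi)=f(A^{\dagger}\phi)=f\bigl((A_1^{\dagger}\otimes I_2+I_1\otimes A_2^{\dagger})\phi\bigr)=\overline{\hat{A}_1^{\dagger}\otimes\hat{I}_2+\hat{I}_1\otimes\hat{A}_2^{\dagger}}(f)(\phi),
\end{equation*}
where the last equality invokes the defining formula (\ref{eqn:def_extension_dual}) for the bar-extension. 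Since both functionals coincide on all of $\Phi_1\otimes\Phi_2$, (\ref{eqn:extension_relation2a}) follows.

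The main obstacle I anticipate is justifying the pointwise identity $A^{\dagger}|_{\Phi_1\otimes\Phi_2}=A_1^{\dagger}\otimes I_2+I_1\otimes A_2^{\dagger}$, since $A^{\dagger}$ is defined only abstractly as a Hilbert space adjoint and the inclusion between the two is in general strict on the full maximal domain. The observation that sidesteps this difficulty is that the test functions $\phi$ always lie in the smaller set $\Phi_1\otimes\Phi_2\subset\mathcal{D}(A_1^{\dagger}\otimes I_2+I_1\otimes A_2^{\dagger})$, so the strict inclusion on the maximal domain is harmless for the purpose of (\ref{eqn:extension_relation2a}).
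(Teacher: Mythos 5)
Your proof is correct and takes essentially the same route as the paper's: both rest on the containment $A_1^\dagger\otimes I_2+I_1\otimes A_2^\dagger\subset(A_1\otimes I_2+I_1\otimes A_2)^\dagger=A^\dagger$ combined with $\Phi_1\otimes\Phi_2\subset\mathcal{D}(A_1^\dagger\otimes I_2+I_1\otimes A_2^\dagger)$ (the paper deduces $\Phi_i\subset\mathcal{D}(A_i^\dagger)$ from $\eta_i\mathcal{D}(A_i)=\mathcal{D}(A_i^\dagger)$ and $\eta_i\Phi_i=\Phi_i$, which is the same mechanism as your factorization $A_i^\dagger=\eta_iA_i\eta_i^{-1}$), yielding the pointwise identity on $\Phi_1\otimes\Phi_2$ and then the equality of extensions via the defining formula (\ref{eqn:def_extension_dual}). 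The only difference is organizational: the paper first establishes the restriction relation $\hat{A}^\dagger|\check{\Phi}=\hat{A}_1^\dagger\otimes\hat{I}_2+\hat{I}_1\otimes\hat{A}_2^\dagger$ by an explicit computation on finite sums $f=\sum_{j}f_j^1\otimes f_j^2$ before passing to (\ref{eqn:extension_relation2a}), whereas your direct verification on all $f\in(\Phi_1\otimes\Phi_2)^{j}$ renders that intermediate step unnecessary.
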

\begin{proof}
First, we show the relation,
\begin{eqnarray}
 \hat{A}^\dagger|\check{\Phi}=\hat{A}_1^\dagger\otimes\hat{I}_2+\hat{I}_1\otimes\hat{A}_2^\dagger,
    \label{eqn:extension_relation2}
\end{eqnarray}
for the restriction 
$\hat{A}^\dagger|\check{\Phi}$ of $\hat{A}^\dagger$ on 
$\check{\Phi}$.
For each $i=1,2$, the relations $A_i\Phi_i\subset \Phi_i$, $\eta_i\mathcal{D}(A_i)=\mathcal{D}(A^\dagger_i)$, and $\eta_i\Phi_i=\Phi_i$ are satisfied, as $A_i$ is an $\eta_i$-quasi Hermitian.
Using these relations, we obtain $\Phi_i\subset\mathcal{D}(A^\dagger_i)$ for each $i$, which shows that
$\Phi_1\otimes\Phi_2\subset\mathcal{D}(A_1^\dagger)\otimes\mathcal{H}_2=\mathcal{D}(A_1^\dagger\otimes I_2)$.
Similarly, 
$\Phi_1\otimes\Phi_2\subset\mathcal{D}(I_1 \otimes A_2^\dagger)$.
Thus, 
$\Phi_1\otimes\Phi_2\subset\mathcal{D}(A_1^\dagger \otimes I_2)\cap\mathcal{D}(I_1 \otimes A_2^\dagger)=\mathcal{D}(A_1^\dagger\otimes I_2+I_1\otimes A_2^\dagger)$.
As $A^\dagger=\overline{A_1\otimes I_2 +I_1\otimes A_2}^\dagger = (A_1\otimes I_2 +I_1\otimes A_2)^\dagger \supset (A_1^\dagger\otimes I_2+I_1\otimes A_2^\dagger)$,
\begin{eqnarray}
 \hat{A}^\dagger(\phi)=(\hat{A}_1^\dagger\otimes\hat{I}_2+\hat{I}_1\otimes\hat{A}_2^\dagger)(\phi)~~\text{for}~~\forall \phi \in \Phi_1\otimes\Phi_2.
    \label{eqn:extension_relation_in_proof}
\end{eqnarray}
Let $f\in \Phi_1^{\prime}\otimes \Phi_2^{\prime}$; $f$ can be represented by $f=\sum_{j=1}^nf_j^1\otimes f_j^2$ where $f_j^1\in\Phi_1^\prime$ and $f_j^2\in\Phi_2^\prime$ for $j=1,2,\cdots, n~(n<\infty)$. 
Then,  
$(\hat{A}_1^\dagger\otimes\hat{I}_2+\hat{I}_1\otimes\hat{A}_2^\dagger)(f)=\sum_{j=1}^n(\hat{A}_1^\dagger f_j^1 \otimes{f}_j^2+f_j^1\otimes\hat{A}_2^\dagger f_j^2)$.
Therefore, for any $\phi=\sum_{l=1}^m\phi_l^1\otimes\phi_l^2\in\Phi_1\otimes\Phi_2$, on using (\ref{eqn:extension_relation_in_proof}), we have
\begin{eqnarray}
    \hat{A}^\dagger(f)(\phi)   =  
    \sum_{j=1}^n(f_j^1\otimes f_j^2)(A^\dagger \phi)
    & = & \sum_{j=1}^n(f_j^1\otimes f_j^2)
    \Big{[}
    \sum_{l=1}^m(A^\dagger_1\phi_l^1\otimes \phi_l^2+\phi_l^1\otimes A_2^\dagger\phi_l^2)
    \Big{]}
    \nonumber\\
     & = &
     \sum_{j=1}^n
     \sum_{l=1}^m
    \Big{[} f_j^1(A^\dagger_1\phi_l^1)\otimes f_j^2(\phi_l^2)+f_j^1(\phi_l^1)\otimes f_j^2(A_2^\dagger\phi_l^2)
    \Big{]}
    \nonumber\\
     & = &
     \sum_{j=1}^n(\hat{A}_1^\dagger f_j^1 \otimes{f}_j^2+f_j^1\otimes\hat{A}_2^\dagger f_j^2)(\phi)
     \nonumber\\
     & = &
    (\hat{A}_1^\dagger\otimes\hat{I}_2+\hat{I}_1\otimes\hat{A}_2^\dagger)(f)(\phi).
    \label{eqn:extension_relation_calculation}
\end{eqnarray}
Similarly, (\ref{eqn:extension_relation_calculation}) holds for any $f\in \Phi_1^{\times}\otimes \Phi_2^{\times}$ and any $\phi=\sum_{l=1}^m\phi_l^1\otimes\phi_l^2\in\Phi_1\otimes\Phi_2$.
Based on $(\Phi_1^{\prime}\otimes \Phi_2^{\prime})
\cap
(\Phi_1^{\times}\otimes \Phi_2^{\times})=\{0\}
$, the relation (\ref{eqn:extension_relation2}) is obtained.
From (\ref{eqn:extension_relation2}), the relation (\ref{eqn:extension_relation2a}) is derived.
\end{proof}
\noindent
Based on Lemma \ref{lemma4.2}, it is found that in the dual space, the adjoint $\hat{A}^\dagger$ can be characterized as the same form as $\hat{A}$.
%
%
\begin{lemma}
\label{lemma4.4}
For the extensions of the positive operators, $\eta_1$, $\eta_2$, and $\eta_1\otimes\eta_2$, to $(\Phi_1\otimes\Phi_2)^\prime\cup (\Phi_1\otimes\Phi_2)^\times$, 
the following relations are satisfied :
\begin{eqnarray}
    \widehat{\eta_1\otimes\eta_2}
    =
    \overline{\hat{\eta}_1
    \otimes
    \hat{\eta}_2}
    ~~
    \text{and}
    ~~
    \widehat{\eta_1\otimes\eta_2}^{-1}
    =
    \overline{
    \hat{\eta}_1^{-1}
    \otimes
    \hat{\eta}_2^{-1}
    },
\label{eqn:extension_relation_positive_ope1a}
\end{eqnarray}
where $\hat{\eta}_i$ is the extension of $\eta_i$ given by (\ref{eqn:extension_definition}) $(i=1,2)$.
\end{lemma}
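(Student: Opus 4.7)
The plan is to establish both equalities by unwinding the defining formulas of the two kinds of extensions and evaluating on test elements $\phi \in \Phi_1 \otimes \Phi_2$, then appealing to the identification $(\Phi_1 \hat{\otimes} \Phi_2)^j \cong (\Phi_1 \otimes \Phi_2)^j$ from the opening of Section \ref{sec:4}.

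First, for the first identity, I would fix $f \in (\Phi_1 \otimes \Phi_2)^j$ ($j = \prime, \times$) and $\phi \in \Phi_1 \otimes \Phi_2$. By the defining relation (\ref{eqn:2-ext1}) the left-hand side equals $\widehat{\eta_1 \otimes \eta_2}(f)(\phi) = f\big((\eta_1 \otimes \eta_2)(\phi)\big)$, while by (\ref{eqn:def_extension_dual}) the right-hand side is $\overline{\hat{\eta}_1 \otimes \hat{\eta}_2}(f)(\phi) = f\big((\eta_1 \otimes \eta_2)(\phi)\big)$. The key observation is that the operator $\eta_1 \otimes \eta_2$ originally defined on $\mathcal{H}_1 \bar{\otimes} \mathcal{H}_2$ and the algebraic tensor product $\eta_1 \otimes \eta_2$ on $\Phi_1 \otimes \Phi_2$ agree on simple tensors (both send $\varphi_1 \otimes \varphi_2$ to $\eta_1 \varphi_1 \otimes \eta_2 \varphi_2$) and extend by linearity, so their actions on $\Phi_1 \otimes \Phi_2$ coincide; hence the two functionals agree on the whole of $\Phi_1 \otimes \Phi_2$ and thus define the same element of $(\Phi_1 \otimes \Phi_2)^j$.

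Second, for the inverse identity, I would invoke the invertibility remark made after (\ref{eqn:eta_RHS}), namely $(\eta_1 \otimes \eta_2)^{-1} = \eta_1^{-1} \otimes \eta_2^{-1}$, together with $\widehat{\eta_1 \otimes \eta_2}^{-1} = \widehat{(\eta_1 \otimes \eta_2)^{-1}}$ and $\hat{\eta}_i^{-1} = \widehat{\eta_i^{-1}}$ from Section \ref{sec:2.2}. The same direct evaluation then yields, for all admissible $f$ and $\phi$,
\begin{equation*}
\widehat{\eta_1 \otimes \eta_2}^{-1}(f)(\phi) = f\big((\eta_1^{-1} \otimes \eta_2^{-1})(\phi)\big) = \overline{\hat{\eta}_1^{-1} \otimes \hat{\eta}_2^{-1}}(f)(\phi),
\end{equation*}
which gives the second identity.

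I do not expect a serious obstacle here: the statement is essentially a consistency check between two constructions built from the same underlying data. The only delicate bookkeeping point is that the extension $\widehat{\eta_1 \otimes \eta_2}$ originally lives on $(\Phi_1 \hat{\otimes} \Phi_2)^\prime \cup (\Phi_1 \hat{\otimes} \Phi_2)^\times$ whereas $\overline{\hat{\eta}_1 \otimes \hat{\eta}_2}$ is defined on $(\Phi_1 \otimes \Phi_2)^\prime \cup (\Phi_1 \otimes \Phi_2)^\times$; I would resolve this by invoking the algebraic isomorphism recorded at the beginning of Section \ref{sec:4}, which reduces equality of functionals to equality on the algebraic tensor product, and then noting that continuity of both sides (guaranteed by Lemma \ref{lemma2.1} and the construction in Section \ref{sec:2.2}) propagates the identity to the completion.
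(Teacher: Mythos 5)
Your proof is correct and is essentially the argument the paper intends: the lemma is stated without a printed proof, but the surrounding machinery (the definitions (\ref{eqn:2-ext1}) and (\ref{eqn:def_extension_dual}), the identification $(\Phi_1\hat{\otimes}\Phi_2)^{j}\cong(\Phi_1\otimes\Phi_2)^{j}$ at the start of Section \ref{sec:4}, and the proof of Lemma \ref{lemma4.1}) proceeds by exactly your method of direct evaluation on $\phi\in\Phi_1\otimes\Phi_2$, using that the Hilbert-space operator $\eta_1\otimes\eta_2$ restricted to $\Phi_1\otimes\Phi_2$ agrees with the algebraic tensor product of $\eta_1$ and $\eta_2$. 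Your treatment of the inverse via $(\eta_1\otimes\eta_2)^{-1}=\eta_1^{-1}\otimes\eta_2^{-1}$, $\widehat{\eta_1\otimes\eta_2}^{-1}=\widehat{(\eta_1\otimes\eta_2)^{-1}}$, and $\hat{\eta}_i^{-1}=\widehat{\eta_i^{-1}}$ likewise matches the relations recorded in Section \ref{sec:2.2}, so no gap remains.
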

\noindent
Note that the relations (\ref{eqn:2-ext3a}) and (\ref{eqn:2-ext3b}) are also obtained from (\ref{eqn:extension_relation_positive_ope1a}) because  $\bra{\varphi_1\otimes\varphi_2}_{\mathcal{H}_1\bar{\otimes}\mathcal{H}_2}$ and $\ket{\varphi_1\otimes\varphi_2}_{\mathcal{H}_1\bar{\otimes}\mathcal{H}_2}$ are in $(\Phi_1\otimes\Phi_2)^\prime$ and $ (\Phi_1\otimes\Phi_2)^\times$, respectively.
\begin{theorem}
\label{theorem4.1}
    Let $A$ be the $\eta_1\otimes\eta_2$-quasi Hermitian operator given by (\ref{eqn:composite_operator}).
    Then, its extension $\hat{A}$ and the adjoint $\hat{A}^\dagger$ of $\hat{A}$ are characterized by the relations (\ref{eqn:extension_relation1a}) and (\ref{eqn:extension_relation2a}), respectively.
Moreover, they satisfy the symmetric relation,
    \begin{eqnarray}
    \bra{\phi}{\hat A^{\dagger}}\ket{\varphi}_{\mathcal{H}_1\bar{\otimes}\mathcal{H}_2}
    =\bra{\phi}\widehat{\eta_1\otimes\eta_2}~
    \hat{A}~\widehat{\eta_1{\otimes}\eta_2}^{-1}\ket{\varphi}_{\mathcal{H}_1\bar{\otimes}\mathcal{H}_2},
    \label{eqn:symmetric_relation_for_extensions}
    \end{eqnarray}
    for any $\phi,\varphi\in\Phi_1\hat{\otimes}\Phi_2$. 
\end{theorem}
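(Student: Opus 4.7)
The first two assertions — the forms of $\hat{A}$ and $\hat{A}^\dagger$ — are verbatim the statements of Lemma~\ref{lemma4.1} and Lemma~\ref{lemma4.2}, so I would simply invoke them. The real work is the symmetric identity (\ref{eqn:symmetric_relation_for_extensions}). My plan is to compute both sides by applying the defining identity $(\hat{X}f)(\phi)=f(X\phi)$ of the extension operator once per dual-space operator that appears, and then to close the calculation with Corollary~\ref{corollary}.

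For the left-hand side, a single use of (\ref{eqn:extension_definition}), letting $\hat{A}^\dagger$ act on the bra $\bra{\phi}_{\mathcal{H}_1\bar{\otimes}\mathcal{H}_2}$ from the right, gives $\bra{\phi}(A^\dagger\varphi)=\langle\phi, A^\dagger\varphi\rangle_{\mathcal{H}_1\bar{\otimes}\mathcal{H}_2}$. For the right-hand side I peel the three operators off $\bra{\phi}$ one at a time from left to right. Using the self-adjointness of $\eta_1\otimes\eta_2$ on $\mathcal{H}_1\bar{\otimes}\mathcal{H}_2$, the first peel yields $\bra{\phi}\widehat{\eta_1\otimes\eta_2}=\bra{(\eta_1\otimes\eta_2)\phi}$. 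The second peel, using (\ref{eqn:extension_definition}) with $X=A$ and the adjoint pull-back $\langle(\eta_1\otimes\eta_2)\phi,A\psi\rangle=\langle A^\dagger(\eta_1\otimes\eta_2)\phi,\psi\rangle$, yields $\bra{(\eta_1\otimes\eta_2)\phi}\hat{A}=\bra{A^\dagger(\eta_1\otimes\eta_2)\phi}$. The third peel, again by self-adjointness of $(\eta_1\otimes\eta_2)^{-1}$, gives $\bra{(\eta_1\otimes\eta_2)^{-1}A^\dagger(\eta_1\otimes\eta_2)\phi}$. Evaluating the resulting bra at $\varphi$ produces $\langle(\eta_1\otimes\eta_2)^{-1}A^\dagger(\eta_1\otimes\eta_2)\phi,\varphi\rangle_{\mathcal{H}_1\bar{\otimes}\mathcal{H}_2}$. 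Corollary~\ref{corollary} supplies $A^\dagger=(\eta_1\otimes\eta_2)A(\eta_1\otimes\eta_2)^{-1}$ on $\Phi_1\hat{\otimes}\Phi_2$, so $(\eta_1\otimes\eta_2)^{-1}A^\dagger(\eta_1\otimes\eta_2)=A$ on that dense domain, and the right-hand side collapses to $\langle A\phi,\varphi\rangle=\langle\phi,A^\dagger\varphi\rangle$, matching the left-hand side.

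The main obstacle is a purely domain-theoretic bookkeeping one: each intermediate vector produced by the peeling — namely $(\eta_1\otimes\eta_2)\phi$, $A^\dagger(\eta_1\otimes\eta_2)\phi$, and $(\eta_1\otimes\eta_2)^{-1}A^\dagger(\eta_1\otimes\eta_2)\phi$ — must remain in $\Phi_1\hat{\otimes}\Phi_2$ so that the next application of (\ref{eqn:extension_definition}) and the final pairing with $\varphi$ are legitimate. This is delivered by the standing stability assumptions $\eta_i\Phi_i\subset\Phi_i$, $\eta_i^{-1}\Phi_i\subset\Phi_i$, and $A_i\Phi_i\subset\Phi_i$ from Sections~\ref{sec:2} and \ref{sec:3}, together with Lemma~\ref{lemma2.1} lifting stability to the tensor product, and Corollary~\ref{corollary} yielding $A^\dagger\Phi_1\hat{\otimes}\Phi_2\subset\Phi_1\hat{\otimes}\Phi_2$. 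With these stability facts lined up, the identity reduces to the short pull-back computation sketched above.
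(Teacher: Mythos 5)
Your proposal is correct, but it reaches (\ref{eqn:symmetric_relation_for_extensions}) by a genuinely different route than the paper. The paper's proof is a two-line reduction: it quotes from \cite{Ohmori2022} the general single-system symmetry relation $\bra{\phi}\hat{T}^{\dagger}\ket{\varphi}_{\mathcal{H}}=\bra{\phi}\hat{\eta}\,\hat{T}\,\hat{\eta}^{-1}\ket{\varphi}_{\mathcal{H}}$, valid for any $\eta$-quasi Hermitian $T$, and substitutes $T=A$, $\eta=\eta_1\otimes\eta_2$ --- the substitution being licensed exactly by Corollary~\ref{corollary} (via Theorem~\ref{theorem3.2}), which certifies that $A$ is $\eta_1\otimes\eta_2$-quasi Hermitian, together with Lemma~\ref{lemma2.1}, which certifies that $\eta_1\otimes\eta_2$ satisfies the standing hypotheses on the metric operator. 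Your peeling argument is, in substance, a self-contained reproof of that imported lemma specialized to the composite setting: the first peel is (\ref{eqn:2-2-3})--(\ref{eqn:2-ext2}), the second pulls $A$ across the pairing to give $\bra{A^{\dagger}(\eta_1\otimes\eta_2)\phi}_{\mathcal{H}_1\bar{\otimes}\mathcal{H}_2}$, the third uses self-adjointness of $(\eta_1\otimes\eta_2)^{-1}$, and the collapse $(\eta_1\otimes\eta_2)^{-1}A^{\dagger}(\eta_1\otimes\eta_2)=A$ again comes from Corollary~\ref{corollary}; the chain checks out. Two of your side remarks add value that the paper's terse proof leaves implicit: (i) you fix the reading of the sandwich as successive right actions on the bra, which is the convention declared after (\ref{eqn:2-ext2}) and is in fact the only reading under which (\ref{eqn:symmetric_relation_for_extensions}) is an identity --- letting the extensions of definition (\ref{eqn:extension_definition}) act on the ket instead would turn the right-hand side into $\langle(\eta_1\otimes\eta_2)^{-1}A(\eta_1\otimes\eta_2)\phi,\,\varphi\rangle_{\mathcal{H}_1\bar{\otimes}\mathcal{H}_2}$, which does not equal $\langle A^{\dagger}\phi,\,\varphi\rangle_{\mathcal{H}_1\bar{\otimes}\mathcal{H}_2}$ in general; (ii) you track the domain bookkeeping ($\Phi_1\hat{\otimes}\Phi_2\subset\mathcal{D}(A^{\dagger})$ and stability of $\Phi_1\hat{\otimes}\Phi_2$ under $\eta_i^{\pm1}$ and $A$), and indeed $A^{\dagger}(\eta_1\otimes\eta_2)\phi=(\eta_1\otimes\eta_2)A\phi$ keeps every intermediate vector in $\Phi_1\hat{\otimes}\Phi_2$, making the third peel immediate. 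The trade-off: the paper's citation is shorter and presents the composite statement as a corollary of the single-system theory, while your version is self-contained and makes explicit which stability assumptions the identity actually consumes. Your handling of the first two assertions --- invoking Lemmas~\ref{lemma4.1} and~\ref{lemma4.2} --- matches the paper, whose proof likewise treats them as already established.
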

\begin{proof}
Generally, for an $\eta$-quasi Hermitian $T$, the symmetry relation $
    \bra{\phi}{\hat T^{\dagger}}\ket{\varphi}_{\mathcal{H}}
    =\bra{\phi}\hat {\eta}
    \hat T\hat {\eta}^{-1}\ket{\varphi}_{\mathcal{H}},
$
between the extensions $\hat{T}$ and $\hat{T}^\dagger$ with respect to $\hat{\eta}$ is satisfied\cite{Ohmori2022}.
This result is applied to the $\eta_1\otimes\eta_2$-quasi Hermitian $A$, and then we obtain (\ref{eqn:symmetric_relation_for_extensions}).
\end{proof}
\noindent
For the expansion $\hat{A}$ and its adjoint $\hat{A}^\dagger$, we obtain
\begin{eqnarray}
    \bra{A\varphi}_{\mathcal{H}_1\bar{\otimes}\mathcal{H}_2}=\bra{\varphi}_{\mathcal{H}_1\bar{\otimes}\mathcal{H}_2}\hat{A},~
    \ket{A\varphi}_{\mathcal{H}_1\bar{\otimes}\mathcal{H}_2}=\hat{A}\ket{\varphi}_{\mathcal{H}_1\bar{\otimes}\mathcal{H}_2},
    \label{eqn:AandAdagger_relations1}
    \\
    \bra{A^\dagger\varphi}_{\mathcal{H}_1\bar{\otimes}\mathcal{H}_2}=\bra{\varphi}_{\mathcal{H}_1\bar{\otimes}\mathcal{H}_2}\hat{A}^\dagger,~
    \ket{A^\dagger\varphi}_{\mathcal{H}_1\bar{\otimes}\mathcal{H}_2}=\hat{A}^\dagger\ket{\varphi}_{\mathcal{H}_1\bar{\otimes}\mathcal{H}_2}.
    \label{eqn:AandAdagger_relations2}
\end{eqnarray}
Therefore, using these relations, 
the spectral expansions (\ref{spectralexpansion_ket_H_1})--(\ref{spectralexpansion_adjoint_bra_H_2}) can be reformulated by $\hat{A}$ and $\hat{A}^\dagger$ on the dual spaces.

\section{Application to non-Hermitian harmonic oscillator}
\label{sec:5}
The representation obtained in the previous sections is applied to a non-Hermitian harmonic oscillator given by a set of conformal $d$-dimensional 
$n$-body systems.
The Hamiltonian of this system is given by\cite{Meljanac2005,Quesne2007}
\begin{eqnarray}
    H=-\frac{1}{2}\sum_{i=1}^{n}\frac{1}{m_i}\nabla^2_i 
    +\frac{1}{2}\omega^2\sum _{i=1}^{n}m_i \vb{x}_i^2
    +\frac{1}{2}c \left( \sum_{i=1}^n\vb{x}_i\cdot \nabla_i +\frac{1}{2}nd \right),
    \label{eqn:5_Hamiltonian}
\end{eqnarray}
where each $m_i$ is the particle mass with the coordinate
$\vb{x}_i=(x_{i1},\cdots,x_{id})$ $(i=1,2,\cdots,n)$ and $c$ is a real constant.
The potential term is eliminated from the Hamiltonian.
Using the following su$(1,1)$ generators $K_0$ and $K_{\pm}$,
\begin{eqnarray}
    K_0 & = & \frac{1}{2\omega}\left( -\frac{1}{2}\sum_{i=1}^n\frac{\nabla_i^2}{m_i}+\frac{\omega^2}{2}\sum_{i=1}^nm_i\vb{x}_i^2\right ),~~
    \label{eqn:5_su(1,1)0}
    \\
    K_{\pm} & = & \frac{1}{2\omega}
    \left[
    \frac{1}{2}\sum_{i=1}^n\frac{\nabla_i^2}{m_i}-\frac{\omega^2}{2}\sum_{i=1}^nm_i\vb{x}_i^2
    \mp \omega
    \left (\sum_{i=1}^n\vb{x}_i\cdot\nabla_i+\frac{1}{2}nd
    \right )
    \right],    
    \label{eqn:5_su(1,1)}
\end{eqnarray}
the model (\ref{eqn:5_Hamiltonian}) produces a simple non-Hermitian 
$\mathcal{PT}$-symmetric oscillator Hamiltonian proposed by Swanson\cite{Swanson2004},
\begin{eqnarray}
    H=\omega \left(a^{\dagger}a+\frac{1}{2}\right ) +\alpha a^2+\beta a^{\dagger,2},
    \label{eqn:Swanson}
\end{eqnarray}
where $K_0=\frac{1}{2}(a^\dagger a+\frac{1}{2})$, $K_{+}=\frac{1}{2}a^{\dagger,2}$, $K_-=\frac{1}{2}a^2$, and we select $c=4\alpha=-4\beta$.
The generators $K_0$ and $K_{\pm}$ given by (\ref{eqn:5_su(1,1)0}) and (\ref{eqn:5_su(1,1)}) represent the Hermitian and non-Hermitian terms, respectively.
Here we assume $\omega^2-4\alpha\beta>0$.
Subsequently, a similarity transformation concerning this Hamiltonian exists.
\begin{eqnarray}
h_\rho & = &  \rho H\rho^{-1}\nonumber\\
        & = & -\frac{1}{2}\mu^2\sum_{i=1}^n\frac{\nabla_i^2}{m_i}+\frac{\nu^2}{2}\sum_{i=1}^nm_i\vb{x}_i^2,   
    \label{eqn:5_similarity_tra}
\end{eqnarray}
where 
%
 $   \mu^2=\frac{1}{1+z}\left[ 1+\frac{z}{\omega |z|}\left(\Omega ^2z^2-\frac{c^2}{4}\right)^{1/2}\right],~
    \nu^2=\frac{1}{1-z}\omega^2\left[1-\frac{z}{\omega |z|}\left(\Omega ^2z^2-\frac{c^2}{4}\right)^{1/2}\right],$
and $\Omega=(\omega^2+\frac{c^2}{4})^{1/2}$.
$z$ is a parameter that determines the Hermiticity of $h_\rho$\cite{Quesne2007}.
Furthermore, the transformation positive parameter $\rho$ in (\ref{eqn:5_similarity_tra}) is given by
\begin{eqnarray}
   \rho & = & \left(
   \frac{\alpha+\beta-\omega z +(\alpha-\beta)\sqrt{1-z^2}}{\alpha+\beta-\omega z -(\alpha-\beta)\sqrt{1-z^2}}\right)^{[2K_0+z(K_++K_-)]/(4\sqrt{1-z^2})}
   \nonumber\\
   & = & \left(
   \frac{\omega z-cq}{\omega z+cq}\right )^{qK_0/(1+z)}
,
    \label{eqn:5_parameter_rho}
\end{eqnarray}
where $q=\frac{\sqrt{1-z^2}}{2}$.
For each $i\in\{1,2,\cdots,n\}$,
the non-Hermitian Hamiltonian, 
$H_i=-\frac{\nabla_i^2}{2m_i}+\frac{\omega^2}{2}m_i\vb{x}_i^2+\frac{c}{2}(\vb{x}_i\cdot\nabla_i+\frac{d}{2})$, comprising (\ref{eqn:5_Hamiltonian}),
has the su$(1,1)$ generator $K_0^{[i]}$ and the corresponding transformation parameter $\rho_i$, which are given by 
\begin{eqnarray}
   K_0^{[i]}=
   \frac{1}{2\omega}\left(
   -\frac{\nabla_i^2}{2m_i}+\frac{\omega^2}{2}m_i\vb{x}_i^2\right )
   \text{~~and~~}
   \rho_i  = 
   \left( \frac{\omega z-cq}{\omega z+cq}\right) ^{qK_0^{[i]}/(1+z)},
    \label{eqn:5_parameters_1}
\end{eqnarray}
respectively.
Thus, we obtain 
\begin{eqnarray}
    K_0=\sum_{i=1}^nK_0^{[i]} \text{~~~and~~~} \rho=\rho_1\rho_2\cdots\rho_n
    \label{eqn:5_parameters_2},
\end{eqnarray}
and 
the similarity transformation shows
\begin{eqnarray}
    h_\rho & = & \rho H\rho^{-1}
     =  (\rho_1\rho_2\cdots\rho_n)H(\rho_1\rho_2\cdots\rho_n)^{-1}\nonumber\\
    & = & 
    \sum_{i=1}^n\rho_iH_i\rho_i^{-1}
     = 
    \sum_{i=1}^nh_{\rho_i},
    \label{eqn:5_similarity_tra_for_rho_i}
\end{eqnarray}
where 
\begin{eqnarray}
\displaystyle h_{\rho_i}=\rho_iH_i\rho_i^{-1}
=\mu^2\left(
    -\frac{\nabla_i^2}{2m_i}\right)
    +\frac{\nu^2}{2}m_i\vb{x}_i^2
    \label{eqn:5_similarity_tra_for_rho_i2}
\end{eqnarray}
is the Hermitian Hamiltonian for the single harmonic oscillator in the $L^2$-space.

We now introduce the RHS treatment and construct a bra-ket formalism using the results obtained in the previous sections.
For simplicity, we set $n=2$ and $d=1$.
Furthermore, we assume that the operators are defined within the bra-ket space (dual spaces).
The RHS exclusively for each harmonic oscillator $h_{\rho_i}$ ($i=1,2$) of the form (\ref{eqn:5_similarity_tra_for_rho_i2}) in the $x$-representation diagram is described by the triplet of Schwartz space $\mathcal{S}=\mathcal{S}(\mathbb{R})$\cite{Bohm1978},
\begin{eqnarray}
    \mathcal{S}\subset L^2 \subset \mathcal{S}^{\prime},\mathcal{S}^{\times}.
    \label{eqn:5_RHS_single_HO}
\end{eqnarray}
The eigenequation for $\hat{h}_{\rho_i}$ is given by 
\begin{eqnarray}
    \hat{h}_{\rho_i} \varphi_{i,l}(x)=\epsilon_l^i\varphi_{i,l}(x)
    \label{eqn:5_eigeneq._for_single_HO}
\end{eqnarray}
with the solutions 
\begin{eqnarray}
    \epsilon_{l}^i &=&  \mu \nu \left( l+\frac{1}{2} \right), 
    \label{eqn:5_solution_of_eigeneq._for_single_HO_0}
    \\
    \varphi_{i,l}(x) &=& \sqrt{\frac{2^{-l}}{l!\mathrm{L_i}\sqrt{\pi} }} e^{-\frac{x^2}{2\mathrm{L_i}^2}} \textrm{H}_l \left(\frac{x}{\mathrm{L_i}} \right),
    \label{eqn:5_solution_of_eigeneq._for_single_HO}
\end{eqnarray}
where $l=0,1,2,\cdots$, $\textrm{H}_l$ is  Hermite polynomials and $\mathrm{L_i}=\sqrt{\mu/m_i\nu}$.
Note that $\varphi_{i,l}(x)$ are the (generalized) eigenfunctions in $\mathcal{S}^{\times}$ constructing the orthonormal base: 
\begin{eqnarray}
    \int_{-\infty}^{\infty}dx~ \varphi_{i,l}(x)\varphi_{i,m}(x)=\delta_{lm},
    ~~
    \sum_{l=0}^\infty \varphi_{i,l}(x')\varphi_{i,l}(x)=\delta(x'-x).
    \label{eqn:onb_for_HO}
\end{eqnarray}
The RHS of the composite system is represented by the tensor product in (\ref{eqn:5_RHS_single_HO}):
\begin{eqnarray}
    \mathcal{S}\hat{\otimes}\mathcal{S}
    \subset 
    L^2 \bar{\otimes} L^2
    \subset 
    (\mathcal{S}\hat{\otimes}\mathcal{S})^{\prime},
    (\mathcal{S}\hat{\otimes}\mathcal{S})^{\times},
    \label{eqn:5_RHS_composite_HO}
\end{eqnarray}
wherein the Hamiltonian corresponding to (\ref{eqn:5_similarity_tra_for_rho_i}) is given by
\begin{eqnarray}
\hat{h}_\rho=\overline{\hat{h}_{\rho_1}\otimes I_2+I_1\otimes \hat{h}_{\rho_2}}.
    \label{eqn:5_2_h_rho}
\end{eqnarray}
It satisfies the eigenequation
\begin{eqnarray}
    \hat{h}_\rho \varphi_{1,l}(x)\otimes \varphi_{2,m}(y)
    =
    (\epsilon_l^1+\epsilon_m^2)\varphi_{1,l}(x)\otimes \varphi_{2,m}(y),
    \label{eqn:5_2_h_rho_eigenequation}
\end{eqnarray}
with the generalized eigenfunctions $\varphi_{1,l}(x)\otimes \varphi_{2,m}(y)\in (\mathcal{S}\hat{\otimes}\mathcal{S})^\times$.
For a wave function $\psi_1(x)\otimes\psi_2(y) \in \mathcal{S}\hat{\otimes}\mathcal{S}$ where $\mathcal{S}\ni \psi_i(x)=\braket{x}{\psi_i}$\cite{Bohm1978}, the spectral expansion is executed in the dual space, as follows:
\begin{align}
    \begin{split}
    \psi_1(x)\otimes\psi_2(y) 
    & = 
    \sum_{l,m}\Bigg\{
    \int_{-\infty}^{\infty}dx^{\prime}\int_{-\infty}^{\infty}dy^{\prime}
    \psi_1(x^{\prime})\otimes \psi_2(y^{\prime})
    \\
    &
    \varphi_{1,l}(x^{\prime})\otimes \varphi_{2,m}(y^{\prime})
    \Bigg\} \varphi_{1,l}(x)\otimes \varphi_{2,m}(y),  
    \end{split}
    \label{eqn:5_2_expansion_h_rho_composite_system1}
\end{align}
\begin{align}
    \begin{split} 
    \hat{h}_{\rho} \psi_1(x)\otimes\psi_2(y) 
    & = 
    \sum_{l,m}(\epsilon_l^1+\epsilon_m^2)\Bigg\{
    \int_{-\infty}^{\infty}dx^{\prime}\int_{-\infty}^{\infty}dy^{\prime}
    \psi_1(x^{\prime})\otimes \psi_2(y^{\prime})
    \\
    & 
     \varphi_{1,l}(x^{\prime})\otimes \varphi_{2,m}(y^{\prime})
    \Bigg\}
    \varphi_{1,l}(x)\otimes \varphi_{2,m}(y).  
    \end{split}
    \label{eqn:5_2_expansion_h_rho_composite_system2}
\end{align}
From the relation of the similarity transformation (\ref{eqn:5_similarity_tra_for_rho_i2}) for each $i$,
 $\hat{H}_i$ becomes an $\hat{\eta}_i$-quasi Hermitian when the positive operator $\hat{\eta}_i$ is given as
\begin{eqnarray}
  \hat{\eta}_i=\hat{\rho}_{i}^2 = 
   \Big{(}\frac{\omega z-cq}{\omega z+cq}\Big{)}^{2qK_0^{[i]}/(1+z)}.
    \label{eqn:5_parameters_eta_i}
\end{eqnarray}
Whereas, by (\ref{eqn:5_similarity_tra}), the Hamiltonian $\hat{H}$ is an $\hat{\eta}$-quasi Hermitian if and only if $\hat{\eta}=\hat{\rho}^2$.
Noting $\hat{\rho}^2=(\hat{\rho}_1\otimes\hat{\rho}_2)^2=\hat{\rho}_1^2 \otimes \hat{\rho}_2^2$, therefore, the Hamiltonian
\begin{eqnarray}
    \hat{H}=\overline{\hat{H}_1\otimes I_2+I_1\otimes \hat{H}_2}
    \label{eqn:5_3_Hamiltonian}
\end{eqnarray}
becomes an $\hat{\eta}_1\otimes\hat{\eta}_2$-quasi Hermitian.
For the adjoint $\hat{H}^\dagger=\overline{\hat{H}_1^\dagger\otimes I_2+I_1\otimes \hat{H}_2^\dagger}$ of $\hat{H}$,
the symmetric relation is satisfied as
\begin{eqnarray}
    \hat{H}^\dagger=\widehat{\eta_1\otimes\eta_2}~
    \hat{H}~\widehat{\eta_1{\otimes}\eta_2}^{-1}.
    \label{eqn:5_3_Hamiltonian_symmetry}
\end{eqnarray}
We consider the $\eta_1\otimes \eta_2$-RHS, induced from the RHS (\ref{eqn:5_RHS_single_HO}),
\begin{eqnarray}
    \mathcal{S}\hat{\otimes}\mathcal{S}
    \subset 
    (L^2 \widetilde{\bar {\otimes}} L^2)_{\eta_1\otimes\eta_2}
    \subset 
    (\mathcal{S}\hat{\otimes}\mathcal{S})^{\prime},
    (\mathcal{S}\hat{\otimes}\mathcal{S})^{\times}.
    \label{eqn:5_eta_RHS_HO}
\end{eqnarray}
The eigenequation of $\hat{H}$ 
is shown by
\begin{eqnarray}
    \hat{H}\Psi_{1,l}(x)\otimes \Psi_{2,m}(y)=(\epsilon_l^1+\epsilon_m^2)\Psi_{1,l}(x)\otimes \Psi_{2,m}(y),
    \label{eqn:5_3_eigenequation_for_H}
\end{eqnarray}
where the generalized eigenfunctions are given by
\begin{eqnarray}
    \Psi_{i,l}(x)=\hat{\rho}_i^{-1}\varphi_{i,l}(x)~~(i=1,2,~l=0,1,2,\cdots).
    \label{eqn:5_3_eigenfunction}
\end{eqnarray}
The completeness and orthogonality relations  constituted by $\{ \Psi_{1,l}(x)\otimes\Psi_{2,m}(y), \hat{\eta}_1(x) \otimes \hat{\eta}_2(y)\Psi_{1,l}(x)\otimes\Psi_{2,m}(y)\}$ are represented as
\begin{eqnarray}
    \sum_{l,m} \Psi_{1,l}(x)\otimes\Psi_{2,m}(y) 
    \hat{\eta}_1(x')\otimes \hat{\eta}_2(y')\Psi_{1,l}(x')\otimes\Psi_{2,m}(y') 
    =  \check{\delta}(x-x')\check{\delta}(y-y'),\label{eq:CompletenessRelation} \\
    \int^\infty_{-\infty} \!dx \int^\infty_{-\infty} \!dy\, \Psi_{1,l}(x)\otimes\Psi_{2,l^\prime}(y) \hat{\eta}_1(x)\otimes\hat{\eta}_2(y)\Psi_{1,m}(x)\otimes\Psi_{2,m^\prime}(y)
     = \delta_{lm}\delta_{l^\prime m^\prime}. 
     \label{eq:OrthogonalityRelation}
\end{eqnarray}
Thus, the spectral expansions for the composite system are obtained as follows:
for the wave function $\psi_1(x)\otimes\psi_2(y)\in\mathcal{S}\hat{\otimes}\mathcal{S}$, 
\begin{align}
    \begin{split} 
    \psi_1(x)\otimes\psi_2(y) 
    & = 
    \sum_{l,m}\Bigg\{
    \int_{-\infty}^{\infty}dx^{\prime}\int_{-\infty}^{\infty}dy^{\prime}
    \psi_1(x^{\prime})\otimes \psi_2(y^{\prime})
    \\
    & \hat{\eta}_1(x')\otimes \hat{\eta}_2(y')
    \Psi_{1,l}(x^{\prime})\otimes \Psi_{2,m}(y^{\prime})
    \Bigg\} \Psi_{1,l}(x)\otimes \Psi_{2,m}(y), 
    \end{split}
    \label{eqn:5_2_expansion_H_composite_system1}
\end{align}
\begin{align}
    \begin{split} 
       \hat{H}\psi_1(x)\otimes\psi_2(y) 
    & = 
    \sum_{l,m}(\epsilon_l^1+\epsilon_m^2)\Bigg\{
    \int_{-\infty}^{\infty}dx^{\prime}\int_{-\infty}^{\infty}dy^{\prime}
    \psi_1(x^{\prime})\otimes \psi_2(y^{\prime})\\
    &\hat{\eta}_1(x')\otimes \hat{\eta}_2(y')
    \Psi_{1,l}(x^{\prime})\otimes \Psi_{2,m}(y^{\prime})
    \Bigg\} \Psi_{1,l}(x)\otimes \Psi_{2,m}(y),  
    \end{split}
    \label{eqn:5_2_expansion_H_composite_system2}
\end{align}
\begin{align}
    \begin{split} 
       \hat{H}^\dagger\psi_1(x)\otimes\psi_2(y) 
    & = 
    \sum_{l,m}(\epsilon_l^1+\epsilon_m^2)\Bigg\{
    \int_{-\infty}^{\infty}dx^{\prime}\int_{-\infty}^{\infty}dy^{\prime}
    \psi_1(x^{\prime})\otimes \psi_2(y^{\prime})\\
    &\Psi_{1,l}(x^{\prime})\otimes \Psi_{2,m}(y^{\prime})
    \Bigg\} \hat{\eta}_1(x)\otimes \hat{\eta}_2(y)\Psi_{1,l}(x)\otimes \Psi_{2,m}(y).  
    \end{split}
    \label{eqn:5_2_expansion_H_composite_system3}
\end{align}
In terms of the bra-ket notation, we set
\begin{eqnarray}
    \ket{\varphi}_{\eta_1\otimes\eta_2}
    =
    \widehat{\eta_1\otimes\eta_2}\ket{\varphi}_{L^2\bar{\otimes} L^2}, 
    \quad 
    \bra{\varphi}_{\eta_1\otimes\eta_2} = \bra{\varphi}_{L^2\bar{\otimes} L^2}\widehat{\eta_1\otimes\eta_2},
\end{eqnarray}
for $\varphi\in \mathcal{S}\hat{\otimes}\mathcal{S}$, and
%
%
we obtain the expansion of the bras, 
\begin{eqnarray}
    \bra{\varphi}_{L^2\bar{\otimes} L^2} 
    &=& \sum_{l,m} \bra{\varphi}_{L^2\bar{\otimes} L^2} \widehat{\eta_1\otimes\eta_2}^{-1} \ket{ \Psi_{1,l}}_{\eta_1}\otimes\ket{ \Psi_{2,m}}_{\eta_2}\bra{ \Psi_{1,l}}_{\eta_1}\otimes\bra{ \Psi_{2,m}}_{\eta_2}, \\
    \bra{\varphi}_{L^2\bar{\otimes} L^2}\hat{H} 
    &=& 
    \sum_{l,m}(\epsilon_l^1+\epsilon_m^2)\bra{\varphi}_{L^2\bar{\otimes} L^2}
   \ket{ \Psi_{1,l}}_{\eta_1}\otimes\ket{ \Psi_{2,m}}_{\eta_2}\bra{ \Psi_{1,l}}_{\eta_1}\otimes\bra{ \Psi_{2,m}}_{\eta_2} \widehat{\eta_1\otimes\eta_2}^{-1}, 
    \\
    \bra{\varphi}_{L^2\bar{\otimes} L^2}\hat{H}^\dagger 
    &=& 
    \sum_{l,m} (\epsilon_l^1+\epsilon_m^2)\bra{\varphi}_{L^2\bar{\otimes} L^2}
    \widehat{\eta_1\otimes\eta_2}^{-1}  \ket{ \Psi_{1,l}}_{\eta_1}\otimes\ket{ \Psi_{2,m}}_{\eta_2}\bra{ \Psi_{1,l}}_{\eta_1}\otimes\bra{ \Psi_{2,m}}_{\eta_2},
\end{eqnarray}
and of the kets,
\begin{eqnarray}
    \ket{\varphi}_{L^2\bar{\otimes} L^2} 
    &=& \sum_{l,m} \bra{ \Psi_{1,l}}_{\eta_1}\otimes\bra{ \Psi_{2,m}}_{\eta_2}\widehat{\eta_1\otimes\eta_2}^{-1} \ket{\varphi}_{L^2\bar{\otimes} L^2} \ket{ \Psi_{1,l}}_{\eta_1}\otimes\ket{ \Psi_{2,m}}_{\eta_2}, \\
    \hat{H}\ket{\varphi}_{L^2\bar{\otimes} L^2} 
    &=& 
    \sum_{l,m}(\epsilon_l^1+\epsilon_m^2)
    \bra{ \Psi_{1,l}}_{\eta_1}\otimes\bra{ \Psi_{2,m}}_{\eta_2}\ket{\varphi}_{L^2\bar{\otimes} L^2} \widehat{\eta_1\otimes\eta_2}^{-1}\ket{ \Psi_{1,l}}_{\eta_1}\otimes\ket{ \Psi_{2,m}}_{\eta_2}, 
    \\
    \hat{H}^\dagger\ket{\varphi}_{L^2\bar{\otimes} L^2} 
    &=& 
    \sum_{l,m} (\epsilon_l^1+\epsilon_m^2)\bra{ \Psi_{1,l}}_{\eta_1}\otimes\bra{ \Psi_{2,m}}_{\eta_2}\widehat{\eta_1\otimes\eta_2}^{-1} \ket{\varphi}_{L^2\bar{\otimes} L^2} \ket{ \Psi_{1,l}}_{\eta_1}\otimes\ket{ \Psi_{2,m}}_{\eta_2}.
\end{eqnarray}
We note that they are consistent with (\ref{spectralexpansion_ket_H_1})--(\ref{spectralexpansion_adjoint_bra_H_2}). 

Although the bra-ket formalism in this study is assumed to be a non-interacting system, interactions can be treated by the perturbation technique\cite{Kato}, for instance, based on the present formalism.

This study has various applications in modern quantum theory, such as open quantum systems, as discussed in this section.
For instance, the tensor product of the RHS can be utilized to mathematically develop the Liouville space formalism and Liouville operator with a non-Hermitian (pseudo-Hermitian) property\cite{Antoiou1998,Gyamfi2020}.
These subjects will be addressed in future studies.

\section{Conclusion}
\label{sec:6}

The mathematical treatment of Dirac’s bra-ket formalism in a quasi-Hermitian composite system is discussed in this study.
The $\eta_1\otimes\eta_2$-rigged Hilbert space is established as the underlying space, based on which the spectral expansions of the bra-ket vectors for the quasi-Hermitian operator can be derived.
In addition to the bra-ket vectors, these expansions are practically executed within a set of dual and anti-dual spaces.
This implies that a mathematical bra-ket description of a quasi-Hermitian composite system can be formulated in these dual spaces.
Furthermore, we show that in dual spaces, the non-Hermitian operator and its adjoint for the composite system are well defined symmetrically, and their symmetric structure is retained.
Our results are applied to a non-Hermitian harmonic oscillator comprising conformal multi-dimensional many-body systems.
%
%




\bigskip

\noindent
{\bf Acknowledgement}

The authors is grateful to
Dr. J. Takahashi for useful comments and encouragement.
Also, the authors is grateful to
Prof. Y.~Yamazaki, Prof. T.~Yamamoto, and Prof. Emeritus A.~Kitada for worthwhile comments and encouragement.
I appreciate Prof. Y.~Yamanaka, 
Prof. H.~Ujino, Prof. I.~Sasaki, 
Prof. H.~Saigo, Prof. F.~Hiroshima, 
Prof. S. Matsutani for comments and encouragement.
This work was supported by the Sasakawa Scientific Research Grant from The Japan Science Society and JSPS KAKENHI Grant Numbers 22K13963 and 22K03442.
We would like to thank Editage (www.editage.jp) for English language editing.\\

\noindent
{\bf Data Availability}

Data sharing is not applicable to this article as no new data were created or analyzed in this study.

\bigskip


\end{document}